\DeclareMathAlphabet{\pazocal}{OMS}{zplm}{m}{n}
\pgfplotsset{compat=newest} 
\pgfplotsset{plot coordinates/math parser=false}
\pgfplotsset{compat=newest} 
\pgfplotsset{plot coordinates/math parser=false}
\pgfplotsset{compat=newest} 
\pgfplotsset{plot coordinates/math parser=false}
\newtheorem{theorem}{Theorem}
\newtheorem{corollary}{Corollary}
\newtheorem{remark}{Remark}
\def\CN{\mathcal{C}\mathcal{N}} 
\DeclareMathOperator{\sinc}{sinc}
\pretocmd\@bibitem{\color{black}\csname keycolor#1\endcsname}{}{\fail}
\newcommand\citecolor[1]{\@namedef{keycolor#1}{\color{black}}}
\def\endthebibliography{%
	\def\@noitemerr{\@latex@warning{Empty `thebibliography' environment}}%
	\endlist
}
\begin{document}
	\title{Average Rate and Error Probability Analysis in Short Packet Communications over RIS-aided URLLC Systems}
	
	\author{Ramin~Hashemi,~\IEEEmembership{Student Member,~IEEE,} Samad Ali, \IEEEmembership{Member,~IEEE},
	 Nurul Huda Mahmood, \IEEEmembership{Member,~IEEE}, and Matti Latva-aho, \IEEEmembership{Senior Member,~IEEE}

	\thanks{R. Hashemi,  S. Ali, NH. Mahmood, and M. Latva-aho are with the Centre for Wireless Communications (CWC), University of Oulu, 90014 Oulu, Finland. e-mails: (ramin.hashemi@oulu.fi, samad.ali@oulu.fi, nurulhuda.mahmood@oulu.fi,  matti.latva-aho@oulu.fi).
	This research has been financially supported by Academy of Finland 6Genesis Flagship (grant 318927).

	\textcolor{black}{Preliminary results of this paper were presented in \cite{Hashemi2021c}.}

		}
}
	
	{}
	
	\maketitle
	\begin{abstract}
		In this paper, the average achievable rate and error probability of a reconfigurable intelligent surface (RIS) aided systems is investigated for the finite blocklength (FBL) regime. The performance loss due to the presence of phase errors arising from limited quantization levels as well as hardware impairments at the RIS elements is also discussed. First, the composite channel containing the direct path plus the product of reflected channels through the RIS is characterized. Then, the distribution of the received signal-to-noise ratio (SNR) is matched to a Gamma random variable whose parameters depend on the total number of RIS elements, phase errors and the channels' path loss. Next, by considering the FBL regime, the achievable rate expression and error probability are identified and the corresponding average rate and average error probability are elaborated based on the proposed SNR distribution. Furthermore, the impact of the presence of phase error due to either limited quantization levels or hardware impairments on the average rate and error probability is discussed. The numerical results show that Monte Carlo simulations conform to matched Gamma distribution to received SNR for sufficiently large number of RIS elements. In addition, the system reliability indicated by the tightness of the SNR distribution increases when RIS is leveraged particularly when only the reflected channel exists. This highlights the  advantages of RIS-aided communications for ultra-reliable and low-latency systems. The difference between Shannon capacity and achievable rate in FBL regime is also discussed. Additionally, the required number of RIS elements to achieve a desired error probability in the FBL regime will be significantly reduced when the phase shifts are performed without error.
	\end{abstract}
	
	\begin{IEEEkeywords}
		Average achievable rate, block error probability,  finite blocklength (FBL), factory automation, reconfigurable intelligent surface (RIS), ultra-reliable low-latency communications (URLLC).   
	\end{IEEEkeywords}
	
	\IEEEpeerreviewmaketitle
	
	\section{Introduction}
	\bstctlcite{IEEEexample:BSTcontrol}
	\textcolor{black}{The fourth industrial evolution or Industry 4.0 is aimed at digitizing industrial technology towards decentralized manufacturing of products and automation of tasks with reduced human involvement in various industrial processes. Industry 4.0 \cite{Aceto2019} is powered by Industrial Internet of Things (IIoT) \cite{Sisinni2018}  which interconnects various elements like sensors and other instruments with industrial management applications through industrial control networks (ICN) \cite{Galloway2013} to enable real-time controlling of ubiquitous actuators (AC) and machines across the smart factory. To this end, traditional wired connections are being replaced with wireless networks \cite{Varghese2014} to minimize the infrastructure expenditure, and achieve higher flexibility. However, this requires guaranteeing wired connectivity performance with wireless links. \textcolor{black}{Hence, having an ultra-reliable and high-precision physical layer communication link is of paramount importance for future industrial applications to ensure end-to-end (E2E) improvement in key performance indicators (KPIs) in a cross-layer perspective and understanding the performance limits of communication systems \cite{Changyang2021_URLLC}.}}

	
     In 5G new radio (NR), ultra-reliable and low-latency communication (URLLC) \cite{3GPPTR38.913,Popovski2014} is one of the three main service categories that address the  requirements of Industry 4.0 \cite{Chen2018b}. In URLLC, the high-reliability interprets error probabilities of less than typically $10^{-5}$, and low-latency targets to 1 ms E2E delay. The main attributes of URLLC are realized e.g. by leveraging the file contents caching \cite{Bastug2014}, utilization of shorter transmission time interval (TTI) \cite{Sachs2018a}, grant free access \cite{Berardinelli2018}, and multi-connectivity. URLLC messages usually carry control information, hence the packet lengths are generally ultra-short. As a result, the blocklength of the channel is finite which necessitates a thorough analysis of achievable rate and decoding error probability as investigated in \cite{Polyanskiy2010,Yang2014}. However, the URLLC transmission demands are not met entirely by the above solutions as the main challenge to ensure high reliability is the random nature of the propagation channel mainly due to multipath fading.

    Recently, the reconfigurable intelligent surface (RIS) technology  \cite{DiRenzo2020,Wu2020b} is introduced \textcolor{black}{as a means to improve the} spectral efficiency and coverage of wireless communication systems by influencing the propagation environment. The use of RIS \textcolor{black}{brings intelligence} to the physical channel. The structure of an RIS is composed of a metasurface where a programmable controller configures and adjusts the phase and/or amplitude response of the metasurface to modify the \textcolor{black}{behaviour of the reflection of an incident wave}. The aim of this operation is that the  received \textcolor{black}{signals at a particular receiver location} are constructively added so that the system performance enhances in terms of increasing e.g. the signal-to-noise ratio (SNR). Based on leveraging passive or active elements at each phase shifter, the RISs are classified into passive and active devices, respectively. 
    Therefore, the RIS technology can be effectively utilized in URLLC short packet transmissions under finite blocklength (FBL) regime in order to improve the IIoT networks' performance in terms of enhancing the received signal quality and ensuring high reliability. In this paper, our aim is to shed some light on the average achievable rate, and error probability analysis of RIS-aided IIoT networks in FBL regime that relies on only statistical measures of channel response.

	\subsection{Related Work}
	\subsubsection{RIS Related Studies on Channel Characterization}
    A number of studies investigate either the ergodic capacity or outage probability analysis of RIS-aided systems by identifying the characteristics of the channel response and received SNR \cite{Tao2020,Badiu2020,Ding2020,VanChien2020,Ding2020a,Han2019a,Elhattab2020,Abdullah2020,Li2020,Hou2020a,DeFigueiredo2021,Atapattu2020,Cui2021,Lv2021SecureSurfaces,Trigui2021SecrecyPhases,Qian2020}. In \cite{Tao2020} the distribution of the absolute value of the composite channel containing direct link is considered to be a Gaussian random variable (RV) for large RIS elements according to the central limit theorem (CLT) and then the ergodic capacity is studied. The authors in \cite{Badiu2020} considered a phase shift error in RIS elements which is distributed as von Mises or uniform RV, \textcolor{black}{following which} the distribution of the SNR is approximated to a Gamma RV. Then, the average error probability in an infinite blocklength channel is analyzed. In \cite{Ding2020} the authors express that the probability density function (PDF) of the reflected channel response in an RIS-aided non-orthogonal multiple access (NOMA) network is a Gaussian RV for a large number of RIS elements. Then, the diversity order analysis was studied for fully constructive adding of signals in the presence of phase error at RIS. The composite channel considered in this paper does not have a direct path between the access point (AP) and the users. In \cite{VanChien2020} the distribution of SNR is approximated as a Gamma RV by employing the moment matching technique and the ergodic capacity as well as the outage probability is studied in an infinite blocklength model i.e. conventional Shannon capacity formula is considered. Note that the assumed composite channel contains the direct link plus the reflected channel from RIS with arbitrary phase shifts so that only the statistical behavior of the phase shifts was taken into account.
    In \cite{Ding2020a} the analytical PDF of the received SNR was found in terms of cascade channel characteristics in an RIS-NOMA network  and a transmission design method was presented based on spatial division multiple access (SDMA). The impact of RIS phase error is also studied in outage probability analysis. 
    In \cite{Han2019a} the upperbound of the ergodic rate is maximized under Rician fading channel between the multi-antenna AP and the user (or RIS). The optimal phase shifts are derived based on the ergodic rate depending on the phase configuration matrix. The authors in \cite{Elhattab2020} derived a closed-form approximation for the ergodic rate of cell-edge users based on Taylor series expansion of logarithmic function. The perfect phase shift assignment is assumed and the cascade channel distribution was considered as a normal RV according to CLT. In \cite{Abdullah2020} the upperbound for the ergodic capacity is derived when perfect phase shift at the RIS is performed. The upperbound is based on the Jensen's inequality for Shannon capacity formula. The authors showed that employing a decode and forward relay will enhance the upperbound capacity significantly.

    Furthermore, the study in \cite{Li2020} analyzed the impact of finite quantization levels on the performance of an RIS-aided transmission by using a tight approximation for the ergodic capacity without assuming a specific distribution for received SNR. The best case and worst case channel characteristics are formulated as a Gamma RV with separate shape and rate parameters for each case in \cite{Hou2020a}. In \cite{DeFigueiredo2021} the absolute value of the reflected channel is considered as a Gamma RV then, the ergodic rate and outage probability analysis is investigated in terms of total RIS elements and having discrete phase shifts. Note that the analysis is performed  over infinite blocklength regime without the presence of the direct channel. \textcolor{black}{Furthermore, the authors in \cite{Atapattu2020} approximated the composite channel reflected from the RIS as a Gamma RV by Kullback-Leibler (KL) divergence method. The asymptotic analysis of outage probability for reciprocal and non-reciprocal scenarios in a two-way communication system is elaborated. In \cite{Cui2021} the SNR coverage probability is studied by modeling the SNR as a Gamma RV in an infinite blocklength regime. In addition, the authors in \cite{Lv2021SecureSurfaces,Trigui2021SecrecyPhases} invoked the Gamma RV modeling for received SNR while the physical layer security and secrecy outage probability is studied.} Finally, the authors in \cite{Qian2020} considered the optimal SNR derived in \cite{Zappone2020} and then, they proposed that the SNR distribution is composed of the product of three independent Gamma RVs and sum of two scaled non-central chi-square RVs based on the eigenvalues of the channel matrices of RIS-AP and RIS-user. The authors compare the proposed analytical distributions with the case that the SNR is only approximated with one Gamma RV. The numerical results showed that there is negligible difference in considering Gamma distribution for the SNR compared with precise analytical distributions. Furthermore, to evaluate the average achievable rate, it is intractable to perform computation of the expectations concerning SNR distribution when a complex expression is taken into account. Therefore, assuming the received SNR as a Gamma RV is tractable and sufficiently accurate. \\
	
    
    \subsubsection{URLLC Studies on Average Achievable Rate and Average Error Probability}
    Several papers investigate the performance analysis of URLLC systems \cite{Ren2020b,Ren2020a,Li2018,Li2019,Tran2020,Ren2019b,Melgarejo2020,Shehab2019} in finite blocklength (FBL) channel model. In  \cite{Ren2020b} the authors proposed to employ massive multiple-input multiple-output (MIMO) systems to leverage in IIoT networks to reduce the latency. The lower bound achievable uplink ergodic rates of massive MIMO  system with finite blocklength codes is analyzed by convexifying the rate formula which holds under specific conditions. The same authors analyzed secure URLLC in IoT applications \cite{Ren2020a} and presented resource allocation problems. The authors in \cite{Li2018,Li2019} studied the analysis of ergodic achievable data rate in FBL regime. In \cite{Li2018} a MISO network is considered and the uplink channel training was studied instead of downlink training by deriving the lower bound of the ergodic rate whereas in \cite{Li2019} channel state information (CSI) is acquired by downlink channel training. Then, the optimal number of training symbols is proposed based on the average data rate expression. 
    In \cite{Tran2020} the downlink MIMO NOMA systems' average error probability under Nakagami-m fading model is investigated in FBL regime. It should be noted that in \cite{Tran2020} the ergodic capacity analysis is not addressed and the average error probability is studied based on a well-known linear approximation for the Q-function. 
    In \cite{Ren2019b} the analysis of lower bound achievable ergodic rate in URLLC transmission is presented based on convexifying the achievable rate function where the function is convex on a specific interval. Grant-free uplink access for an RIS-assisted industrial MIMO network is studied in \cite{Melgarejo2020} as well as outage probability is done based on numerical Monte Carlo simulations. The effective capacity which is the maximum transmission rate under certain delay constraints was studied in  \cite{Shehab2019}. The authors introduced a closed-form expression for the effective capacity in a Rayleigh fading channel.

	\subsection{Contributions}
    Even though the aforementioned studies cover the topics of RIS and short-packet communication, to the best of our knowledge, there is no previous reports on the the performance analysis of an RIS-aided transmission with/without the presence of phase noise in an FBL regime for URLLC applications. Motivated by the above works we aim to elaborate on the analysis of average achievable rate and block error probability of RIS-aided factory automation wireless transmissions under FBL model. We extend our results for the case \textcolor{black}{with errors in the RIS} phase shift adjustments \textcolor{black}{arising due to, e.g., limitation of quantization bits or hardware imperfections}. The contributions of our work are summarized in the following
	\begin{itemize}
	    \item The downlink received signal containing the direct link plus a reflected signal from the RIS to the AC is identified. Then, the received SNR is statistically matched to a Gamma RV with/without the presence of phase noise at the RIS elements which is due to the quantization error or hardware impairments at the RIS phase controller. 
	    \item The average achievable rate and error probability assuming FBL regime is mathematically elaborated in terms of the characteristics of Gamma RV for the SNR. 
	    \item Since the results involves computing high-complexity functions, a tractable and closed-form lower bound formula for the average achievable rate and error probability under FBL regime is presented.
	    \item The impact of quantization error and hardware impairments are modeled as a uniform RV in the phase shift argument. Then, the corresponding modifications due to these effects on the SNR distributions are studied in detail and mathematical equations for the mean and the variance of presented Gamma RV are identified. \textcolor{black}{Furthermore, a closed-form formula which is useful in design considerations to find total channel blocklength value is studied in terms of the average rate and dispersion functions as well as the effect of the phase error on total channel blocklength}.
	\end{itemize}
	\subsection{Notations and Structure of the Paper}
	In this paper, $\textbf{h} \sim \CN(\textbf{0}_{N\times 1},\textbf{C}_{N\times N})$ denotes circularly-symmetric (central) complex normal distribution vector with zero mean $\textbf{0}_{N\times 1}$ and covariance matrix $\textbf{C}$. The operators $\mathbb{E}[\cdot]$ and $\mathbb{V}[\cdot]$ denote the statistical expectation and variance, respectively. Also, $X\sim \Gamma(a,b)$ denotes gamma random variable with shape and rate  parameters $a$ and $b$, respectively. \textcolor{black}{A uniform distributed random variable with range $[a,b]$ is shown as $Y\sim\mathcal{U}(a,b)$.} The operation $[\cdot]^H$ denotes the conjugate transpose of a matrix or vector.

	The structure of this paper is organized as follows. In Section II, the system model and mathematical identification of the received SNR and its distribution is presented. Section III presents the derivation of average rate and average error probability. In Section IV we extend our derivations to the case where phase error occurs. The numerical results are presented in Section V. Finally, Section VI concludes the paper.
	
	\section{System Model}
    Consider the downlink of an RIS-aided network consisting of a single antenna AP and AC where the RIS has $N = N_1 \times N_2$ elements. The channel response between the AP and AC has a direct path component plus a reflected channel from the RIS. Let us denote the direct channel as $h_{\text{AC}}^{\text{AP}} \sim \CN(0,\eta^{\text{AP} \rightarrow \text{AC}})$ where $\eta^{\text{AP} \rightarrow \text{AC}}$ denotes the path loss attenuation due to large scale fading. $\textbf{h}_{\text{RIS}}^{\text{AP}} \in \mathbb{C}^{N \times 1}$ and $\textbf{h}_{\text{AC}}^{\text{RIS}} \in \mathbb{C}^{N \times 1}$ represent the vector channels from the AP to the RIS and from the RIS to the AC, respectively. \textcolor{black}{We assume a block-fading channel model where the coefficients are quasi-static in each coherence interval. In addition, under the assumption of half-wavelength spacing between the RIS elements, the spatial correlation matrix of the RIS channel vector is very close to the independent and identically distributed (i.i.d.) Rayleigh fading in an isotropic propagation \cite{emil}. Therefore, all channel coefficients are assumed to be mutually independent and are modeled as circularly-symmetric Gaussian random variable random variables due to isotropic scattering\footnote{\textcolor{black}{It is worth noting that in practice the covariance matrices of reflected channel vectors may not be diagonal, however, for the sake of analytical tractability we assume diagonal covariance matrices and leave the more practical case as a future research topic.}}.} The channel vector  $\textbf{h}_{\text{RIS}}^{\text{AP}}$ is distributed as $\CN(\textbf{0}_{N \times 1},\boldsymbol{\eta}^{\text{AP}\rightarrow\text{RIS}}_{N \times N})$  where $\boldsymbol{\eta}^{\text{AP}\rightarrow\text{RIS}}=\text{diag}(\eta^{\text{AP}\rightarrow\text{RIS}}_1,...,\eta^{\text{AP}\rightarrow\text{RIS}}_N)$ is a diagonal matrix including the path loss coefficients from the AP to the RIS elements. Similarly, the channel between the RIS and AC is distributed as $\textbf{h}_{\text{AC}}^{\text{RIS}} \sim \CN(\textbf{0}_{N \times 1},\boldsymbol{\eta}^{\text{RIS}\rightarrow\text{AC}}_{N \times N}) $ where  $\boldsymbol{\eta}^{\text{RIS}\rightarrow\text{AC}}=\text{diag}(\eta^{\text{RIS}\rightarrow\text{AC}}_1,...,\eta^{\text{RIS}\rightarrow\text{AC}}_N)$ denotes the covariance matrix in this case. \textcolor{black}{Since, our aim is to obtain the upper bound performance of an RIS-aided URLLC systems, we assume that the direct path channel coefficients and the reflected channel vectors are perfectly available at the AP. Nevertheless, the channel estimation at the RIS-aided systems is thoroughly investigated in the literature, e.g., in \cite{He2020b,Wei2021,TahaEstimation}. For example, in \cite{He2020b} a general two-stage framework based on combined bilinear sparse matrix factorization and matrix completion is presented to extract the vector channel responses which consist of transmitter (AP) to RIS and RIS to receiver (AC). In addition, a channel estimation strategy based on compressed sensing and deep learning by activation of a number of RIS elements is proposed in \cite{TahaEstimation}.} Furthermore, in factory automation environments each actuator is almost in a fixed location and there is approximately low velocity in different modules. Therefore, the quasi-static channel fading model can be applied here. We assume that there is no interference for simplicity and leave the analysis in an interference network as future work. 
	\begin{figure}[t]
		\centering
		\includegraphics[width=0.5\textwidth]{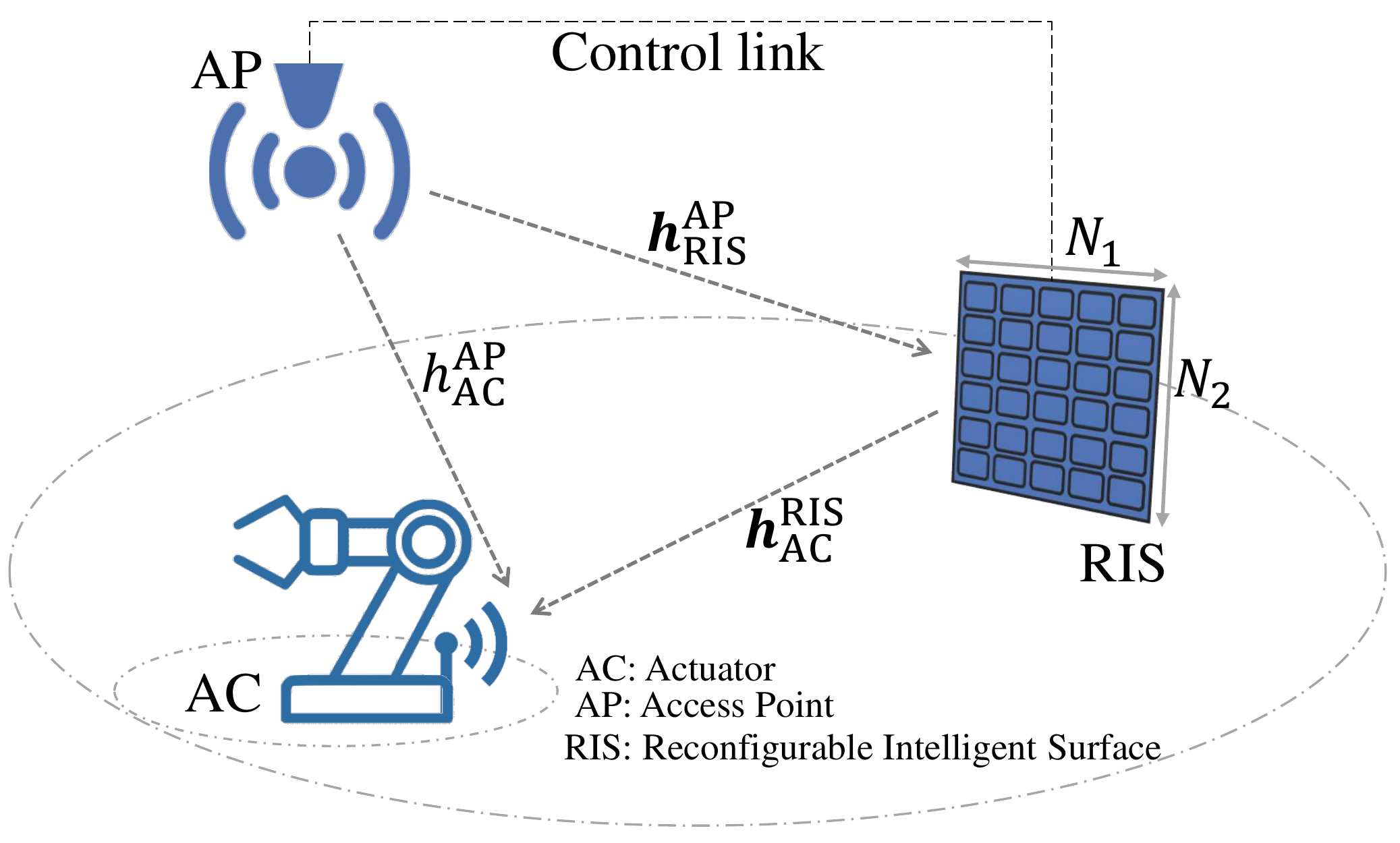}
		\caption{The system model.}
		\label{fig:2}
    \end{figure}
	
	\textcolor{black}{This work assumes a single-shot transmission~\cite{MPJ+16_oneStage, Anand2018} which means that the transmitter has to meet the target reliability/block error rate (BLER) without any retransmissions.  Such an assumption allows us to analyze the lower bound of URLLC performance since retransmissions improve reliability, albeit at the cost of additional latency~\cite{Popovski.2018}. Here, we assume that the transmission slot is at least equal to the latency budget plus propagation and processing delay.} The received signal at the AC is given by
	\begin{flalign}
	    y(t) = \left(h_{\text{AC}}^{\text{AP}} + {\textbf{h}_{\text{AC}}^{\text{RIS}}}^H\boldsymbol{\Theta}{\textbf{h}_{\text{RIS}}^{\text{AP}}}\right) s(t) +  n(t),
	\end{flalign}
	where $s(t)$ is the transmitted symbol from the AP with $\mathbb{E}[|s(t)|^2] = p$ in which $p$ is the transmit power, and $n(t)$ is the additive white Gaussian noise with $\mathbb{E}[|n(t)|^2] = N_0 W$ where $N_0$, $W$ are the noise spectral density and the system bandwidth, respectively. The complex reconfiguration matrix $\boldsymbol{\Theta}_{N \times N}$ indicates the phase shift and the amplitude attenuation of RIS which is defined as 
	\begin{flalign}
	    \boldsymbol{\Theta}_{N \times N} = & \text{diag}(\beta_1 e^{j\theta_1},\beta_2 e^{j\theta_2},...,\beta_N e^{j\theta_N}), \nonumber \\ 
	    \beta_n \in & [0,1], \quad \forall n \in \mathcal{N} \nonumber \\
	    \theta_n \in & [-\pi,\pi), \quad \forall n \in \mathcal{N}
	\end{flalign}
	where $\mathcal{N} = \{1,2,...,N\}$. Note that in our model we have assumed that the RIS elements \textcolor{black}{have no coupling or there is no joint processing among elements \cite{DiRenzo2020}}. Hence, the phase shifts and amplitude control are done independently. The phase alignment error of the RIS is defined as \textcolor{black}{$\phi_n = \angle{h^{\text{AP}}_{\text{AC}}}- \angle [\textbf{h}_{\text{AC}}^{\text{RIS}}]_n + \angle [\textbf{h}_{\text{RIS}}^{\text{AP}}]_n + \theta_n $} which occurs due to hardware limitations and/or finite number of quantization levels available at RIS phase shifters. More precisely, without considering hardware impairments the phase discrete set is selected from the following set
    \begin{flalign}
        \theta_n \in \Theta = \left\{-\pi,-\pi+\Delta,-\pi+2\Delta,...,-\pi+(Q-1)\Delta\right\}, \smallskip \forall n \in \mathcal{N}
        \label{phi_set_v1}
    \end{flalign}
    where $Q=2^b$ is the number of quantization levels, $b$ denotes the number of bits assigned to a discrete and quantized phase and $\Delta=\frac{\pi}{2^{b-1}}$ is the quantization step.

	Based on the received signal at AC and denoting the number of information bits $L$\footnote{It should be noted that $L$ that is the size of packets is assumed the same for actuator and \textcolor{black}{access point}.} that can be transmitted with target error probability $\varepsilon$ in $r$ channel uses ($r \geq 100$) the maximal achievable rate over a quasi-static additive white gaussian channel (AWGN) is given by \cite{Polyanskiy2010}
	\begin{flalign}
        R^{*}(\gamma,L,\varepsilon)=\frac{L}{r} =  \text{C}(\gamma) - Q^{-1}(\varepsilon)\sqrt{ \frac{\text{V}(\gamma)}{r}} + \mathcal{O}\left(\frac{\log_2(r)}{r}\right),
         \label{achievable_rate_urllc}
    \end{flalign}
    where $\text{C}(\gamma) = \log_2(1+\gamma)$ is the Shannon capacity formula under infinite blocklength assumption. The dispersion of the channel is defined as  $\text{V}(\gamma) = (\log_2(e))^2 \big( 1- \frac{1}{(1+\gamma)^2} \big)$. Note that $Q^{-1}(\cdot)$ is the inverse of Q-function which is defined as $Q(x) = \frac{1}{\sqrt{2\pi}}\int_{x}^{\infty}e^{-\nu^2/2}d\nu$ and
    \begin{flalign}
    \gamma = \rho\left|h_{\text{AC}}^{\text{AP}} + {\textbf{h}_{\text{AC}}^{\text{RIS}}}^H\boldsymbol{\Theta}{\textbf{h}_{\text{RIS}}^{\text{AP}}}\right|^2,
    \end{flalign}
    where $\rho = \frac{p}{N_0W}$ denotes the instantaneous SNR. The approximate of $r$ in terms of $\varepsilon$ can be expressed as \cite{Anand2018}
    \begin{flalign}
        r \approx  \frac{L}{\text{C}(\gamma)} + & \frac{\big(Q^{-1}(\varepsilon)\big)^2 \text{V}(\gamma)}{2\big(\text{C}(\gamma)\big)^2}  \\ \nonumber & + \frac{\big(Q^{-1}(\varepsilon)\big)^2 \text{V}(\gamma)}{2\big(\text{C}(\gamma)\big)^2}\sqrt{1+\frac{4L\text{C}(\gamma)}{\big(Q^{-1}(\varepsilon)\big)^2 \text{V}(\gamma)}},
    \end{flalign}
    
    Note that the term $\mathcal{O}\big(\frac{\log_2(r)}{r}\big)$ in \eqref{achievable_rate_urllc} is neglected throughout this paper as it is approximately zero for $r \geq 100$ channel uses. The decoding error probability at the AC for a packet of size $L$ transmitted via $r$ symbols is written as 
	\begin{flalign}
	    \varepsilon = Q\left(f(\gamma,r,L)\right),
	\end{flalign}
	where $ f(\gamma,r,L) = \sqrt{\frac{r}{V(\gamma)}}(\log_2(1+\gamma)-\frac{L}{r}) $. It is observed from \eqref{achievable_rate_urllc} when the blocklength approaches infinity the rate will be 
	\begin{flalign}
	    \lim_{r \rightarrow \infty} R^{*}(\gamma,L,\varepsilon) = \log_2\left(1+\rho\left|h^{\text{AP}}_{\text{AC}} + {\textbf{h}_{\text{AC}}^{\text{RIS}}}^H\boldsymbol{\Theta}{\textbf{h}_{\text{RIS}}^{\text{AP}}}\right|^2\right),
	\end{flalign}
	which is the conventional Shannon capacity formula.

	To determine the achievable average rate, we need to identify the distribution of $\gamma$. In the following, we present the related theorems and derive a closed-form and tractable approximation for the average rate.
	
	\begin{theorem}[SNR distribution]
	\label{SNR_distribution_theorem}
    Let $X = \left|h^{\text{AP}}_{\text{AC}} + {\textbf{h}_{\text{AC}}^{\text{RIS}}}^H\boldsymbol{\Theta}{\textbf{h}_{\text{RIS}}^{\text{AP}}}\right|^2$ and given $N>>1$, the distribution of $X$ is approximately matched to a Gamma random variable with the following parameters \cite{VanChien2020,Hou2020a}
    \begin{flalign}
       X \sim \Gamma(\alpha',\beta'), 
    \end{flalign}
    where $\alpha$ and $\beta$ are given in terms of first and second order moment of $X$ as 
    \begin{flalign}
    \label{alpha}
       \alpha' = \frac{(\mathbb{E}[X])^2}{\mathbb{E}[X^2]-(\mathbb{E}[X])^2},   \\  
       \beta' = \frac{\mathbb{E}[X]}{\mathbb{E}[X^2]-(\mathbb{E}[X])^2},
    \end{flalign}
    where $\mathbb{E}[X]$ and $\mathbb{E}[X^2]$ are given in \eqref{expected_value_3} and  \eqref{appndx_2}, respectively when phase error occurs ($\phi_n\neq0$, $\forall n \in \mathcal{N}$) and are given in \eqref{appndx_b_eq1} and \eqref{appndx_3} when $\phi_n=0$, $\forall n \in \mathcal{N}$. For SNR distribution we have  $\gamma = \rho X$. Therefore,   $\mathbb{E}[\gamma] = \rho \mathbb{E}[X]$ and $\mathbb{E}[\gamma^2] = \rho^2\mathbb{E}[X^2]$
    which implies that $\gamma \sim \Gamma(\alpha,\beta)$ with the same $\alpha $ as in \eqref{alpha} and $\beta = \frac{\beta'}{\rho}$.
    \end{theorem}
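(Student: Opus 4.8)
The plan is to obtain the Gamma law by the moment-matching method, with the Gamma family itself justified by a central-limit-theorem (CLT) argument, and then to reduce the statement to the evaluation of the first two moments of $X$. Writing the reflected component as $\sum_{n=1}^{N}[\textbf{h}_{\text{AC}}^{\text{RIS}}]_n^{*}\beta_n e^{j\theta_n}[\textbf{h}_{\text{RIS}}^{\text{AP}}]_n$, each summand is an independent product of two circularly-symmetric Gaussian coefficients modulated by the configured (and possibly erroneous) phase. For $N\gg1$ the sum of these $N$ independent contributions is, by the CLT, well approximated by a complex Gaussian, so the composite amplitude $h_{\text{AC}}^{\text{AP}}+(\textbf{h}_{\text{AC}}^{\text{RIS}})^{H}\boldsymbol{\Theta}\textbf{h}_{\text{RIS}}^{\text{AP}}$ is approximately (Rician-type) Gaussian and its squared modulus $X$ is a non-negative, concentrated quantity of chi-square type. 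A Gamma random variable is the standard flexible two-parameter surrogate for such a quantity, and fitting it to $X$ by matching the first two moments is precisely the device used in \cite{VanChien2020,Hou2020a}.

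Granting the Gamma form, I would fix its parameters by moment matching. A variable $\Gamma(\alpha',\beta')$ has mean $\alpha'/\beta'$ and variance $\alpha'/(\beta')^{2}$; equating these to $\mathbb{E}[X]$ and to $\mathbb{V}[X]=\mathbb{E}[X^{2}]-(\mathbb{E}[X])^{2}$ and solving the two equations yields exactly $\alpha'=(\mathbb{E}[X])^{2}/(\mathbb{E}[X^{2}]-(\mathbb{E}[X])^{2})$ and $\beta'=\mathbb{E}[X]/(\mathbb{E}[X^{2}]-(\mathbb{E}[X])^{2})$, the claimed parameters. The SNR statement then follows at once: since $\gamma=\rho X$ with $\rho$ deterministic, $\mathbb{E}[\gamma]=\rho\,\mathbb{E}[X]$ and $\mathbb{E}[\gamma^{2}]=\rho^{2}\,\mathbb{E}[X^{2}]$, and substituting these into the same formulas shows that the shape parameter is scale-invariant (so $\alpha$ is unchanged) while the rate rescales to $\beta=\beta'/\rho$.

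The remaining and principal task is the computation of $\mathbb{E}[X]$ and $\mathbb{E}[X^{2}]$. Expanding $X=\big|h_{\text{AC}}^{\text{AP}}+\sum_{n}c_n\big|^{2}$ with $c_n=[\textbf{h}_{\text{AC}}^{\text{RIS}}]_n^{*}\beta_n e^{j\theta_n}[\textbf{h}_{\text{RIS}}^{\text{AP}}]_n$, I would take expectations using the mutual independence of the channel coefficients, the moments of their Rayleigh magnitudes (the second moment of each coefficient equals the corresponding path-loss coefficient, and the fourth moments are known multiples thereof), and the distribution of the phase-alignment error $\phi_n$ induced by quantization or hardware impairments. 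In the off-diagonal products the phase factors average through moments such as $\mathbb{E}[\cos\phi_n]$ and $\mathbb{E}[\cos^{2}\phi_n]$, whereas the direct--reflected cross terms are suppressed by the uniformly distributed phase. The first moment is comparatively clean; the \emph{hard part} is $\mathbb{E}[X^{2}]$, where squaring produces a quadruple sum $\sum_{n,m,k,l}$ of four channel gains and four phase factors, and one must carefully identify which index patterns survive the expectation (the diagonal and paired-index configurations) and then combine the surviving fourth-order channel moments with the phase-dependent factors. These bookkeeping-intensive evaluations produce the closed forms quoted separately for $\phi_n\neq0$ and $\phi_n=0$, completing the determination of $\alpha'$ and $\beta'$.
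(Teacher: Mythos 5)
Your overall route is the same as the paper's: justify the Gamma surrogate for the concentrated, non-negative quantity $X$ at large $N$, fit $(\alpha',\beta')$ by matching the first two moments of a $\Gamma(\alpha',\beta')$ variable to $\mathbb{E}[X]$ and $\mathbb{V}[X]$, observe that the shape parameter is scale-invariant under $\gamma=\rho X$ while the rate becomes $\beta'/\rho$, and then reduce everything to the explicit evaluation of $\mathbb{E}[X]$ and $\mathbb{E}[X^2]$ via expansion, mutual independence, Rayleigh-magnitude moments and the phase-error moments. All of that matches Appendices A and B of the paper, and your parameter-fitting and scaling steps are correct.

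There is, however, one concrete error in your description of which terms survive the expectation: you assert that ``the direct--reflected cross terms are suppressed by the uniformly distributed phase.'' In this model the RIS phase $\theta_n$ is chosen to align the reflected paths with the direct one, so the residual $\phi_n=\angle h^{\text{AP}}_{\text{AC}}-\angle[\textbf{h}_{\text{AC}}^{\text{RIS}}]_n+\angle[\textbf{h}_{\text{RIS}}^{\text{AP}}]_n+\theta_n$ is uniform only on the small interval $[-\epsilon\pi,\epsilon\pi]$, whence $\mathbb{E}[\cos\phi_n]=\sinc(\epsilon)\neq 0$ (and equals $1$ in the error-free case $\phi_n=0$). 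The direct--reflected cross terms therefore do not vanish: they contribute the $O(N)$ term $\frac{\pi}{4}\sqrt{\pi\eta^{\text{AP}\rightarrow\text{AC}}\eta^{\text{AP}\rightarrow\text{RIS}}\eta^{\text{RIS}\rightarrow\text{AC}}}\sum_{n}\cos(\phi_n)$ appearing in \eqref{expected_value_3} and \eqref{appndx_b_eq1}, together with several analogous terms proportional to $\sqrt{\varsigma^3\varrho\vartheta}$ and $\sqrt{\varsigma\varrho^3\vartheta^3}$ in \eqref{appndx_2}. These are among the dominant contributions whenever the direct link is present, and dropping them would yield incorrect values of $\alpha'$ and $\beta'$. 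Apart from this, your proof skeleton is sound, though the substantive content of the result lies in the quadruple-sum bookkeeping for $\mathbb{E}[X^2]$ that you outline but do not carry out.
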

    \begin{proof}
        The detailed proof is given in Appendix \ref{appendixA} in the presence of phase error i.e. $\phi_n \neq 0$, $\forall n \in \mathcal{N}$ and Appendix \ref{appendixB} when $\phi_n = 0$, $\forall n \in \mathcal{N}$.
    \end{proof}
    \textcolor{black}{\begin{remark}
    For sufficiently large number of RIS elements $N$ the parameters $\alpha$ and $\beta$ in $\gamma \sim \Gamma(\alpha,\beta)$ asymptotically converge to 
    \begin{flalign}
        \lim_{N\rightarrow \infty} \alpha  & =  \lim_{N\rightarrow \infty} \frac{(\mathbb{E}[\gamma])^2}{\mathbb{V}[\gamma]} = \lim_{N\rightarrow \infty} \frac{\mathcal{O}\left(N^4\right)}{\mathcal{O}\left(N^3\right)} \rightarrow \infty, \\ 
        \lim_{N \rightarrow \infty} \beta  & =  \lim_{N\rightarrow \infty} \frac{\mathbb{E}[\gamma]}{\mathbb{V}[\gamma]} =  \lim_{N\rightarrow \infty} \frac{\mathcal{O}\left(N^2\right)}{\mathcal{O}\left(N^3\right)} \rightarrow 0.
    \end{flalign}
    where the above result is obtained regardless of with or without presence of the phase noise. In addition, it is inferred that when $N$ approaches infinity, the average value of the modeled SNR $\mathbb{E}[\gamma]=\frac{\alpha}{\beta}$ asymptotically grows with $\mathcal{O}\left(N^2\right)$. It should be noted that the asymptotic analysis, in this case, is valid when the practical value for the number of RIS elements $N$ is sufficiently large enough to ensure that the direct path loss is comparable with the reflected path coefficients.
    \label{rem1}
    \end{remark}}
    
    \section{Average Rate and Average Error Probability}
    \subsection{Average Achievable Rate}    
    In the previous section, we have modeled the SNR distribution, and the related Gamma distribution parameters $\alpha$ and $\beta$ are obtained for two cases with/without phase errors at RIS. Next, to compute the average rate, the instantaneous achievable rate should be averaged over the SNR distribution \textcolor{black}{which} we investigate in the next theorem. 
    \begin{theorem}
    \label{ergodic_rate_theo}
        The exact average achievable rate of the actuator in the RIS-aided URLLC transmission given the distribution of SNR $\gamma \sim \Gamma(\alpha,\beta)$ is expressed as 
        \begin{flalign}
           \bar{R}(L,\varepsilon) = & \mathcal{C}_1 -  \frac{Q^{-1}(\varepsilon)}{\sqrt{r}}\mathcal{C}_2   \\\nonumber
           = &\frac{\beta^{\alpha}}{\Gamma(\alpha)\ln2} \sum_{k=1}^{\infty}\frac{1}{k}\Gamma(k+\alpha) \textbf{U}(k+\alpha,1+\alpha,\beta)  \\\nonumber & - \frac{Q^{-1}(\varepsilon)\beta^{\alpha}}{\sqrt{r}\ln2}\sum_{k=0}^{\infty}\binom{\frac{1}{2}}{k}(-1)^k \textbf{U}(\alpha,1-2k+\alpha,\beta),
        \end{flalign}
    where $\mathcal{C}_1$ and $\mathcal{C}_2$ are given in \eqref{r_1_summation_form} and \eqref{r2_expectation_binomial_series3}, respectively and $\textbf{U}(a,b,z)=\frac{\int_{0}^{\infty}(1+u)^{b-a-1}u^{a-1}e^{-zu}\,du}{\Gamma(a)}$ denotes the confluent hypergeometric Kummer U function  \cite[Eq. (9.211)]{gradshteyn2014table}, and the Gamma function is denoted by $\Gamma(\alpha) = \int_0^\infty y^{\alpha-1}e^{-y}\,dy$.
    \end{theorem}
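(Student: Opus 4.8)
The plan is to obtain $\bar R(L,\varepsilon)$ directly as the expectation of the finite-blocklength rate $R^{*}(\gamma,L,\varepsilon)$ in \eqref{achievable_rate_urllc} over the matched Gamma law $\gamma\sim\Gamma(\alpha,\beta)$ supplied by Theorem~\ref{SNR_distribution_theorem}. By linearity of expectation I would split $\bar R=\mathcal{C}_1-\frac{Q^{-1}(\varepsilon)}{\sqrt r}\mathcal{C}_2$ with $\mathcal{C}_1=\mathbb{E}[\text{C}(\gamma)]=\mathbb{E}[\log_2(1+\gamma)]$ and $\mathcal{C}_2=\mathbb{E}[\sqrt{\text{V}(\gamma)}]$, and treat the two terms independently. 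In each case the expectation is written as $\frac{\beta^{\alpha}}{\Gamma(\alpha)}\int_0^{\infty}(\cdot)\,x^{\alpha-1}e^{-\beta x}\,dx$, and the strategy is to reduce the integrand to a superposition of elementary kernels of the form $(1+x)^{-m}x^{\alpha-1}e^{-\beta x}$, each of which integrates to a Kummer $\textbf{U}$ through the integral representation stated right after the theorem.

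For $\mathcal{C}_1$ the crucial move is to avoid the naive Maclaurin expansion of $\ln(1+x)$ (which diverges for $x>1$ and would not land on clean $\textbf{U}$ kernels) and instead use the globally convergent series $\ln(1+x)=\sum_{k=1}^{\infty}\frac{1}{k}\big(\frac{x}{1+x}\big)^{k}$, valid for every $x\ge 0$ because $\frac{x}{1+x}\in[0,1)$. Substituting and exchanging summation and integration turns the $k$-th term into $\int_0^{\infty}\frac{x^{k+\alpha-1}}{(1+x)^{k}}e^{-\beta x}\,dx=\Gamma(k+\alpha)\,\textbf{U}(k+\alpha,1+\alpha,\beta)$, where the parameters are read off from $a=k+\alpha$ and $b-a-1=-k$, i.e. $b=1+\alpha$. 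Dividing by $\ln 2$ reproduces the first series exactly.

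For $\mathcal{C}_2$ I would first rewrite $\sqrt{\text{V}(\gamma)}=\frac{1}{\ln 2}\sqrt{1-(1+\gamma)^{-2}}$ using $\text{V}(\gamma)=(\log_2 e)^2\big(1-(1+\gamma)^{-2}\big)$ and $\log_2 e=\tfrac{1}{\ln 2}$, then expand the square root by the generalized binomial theorem, $\sqrt{1-u}=\sum_{k=0}^{\infty}\binom{\frac{1}{2}}{k}(-1)^k u^{k}$, with $u=(1+\gamma)^{-2}\in[0,1)$. The remaining moments evaluate to $\mathbb{E}[(1+\gamma)^{-2k}]=\beta^{\alpha}\,\textbf{U}(\alpha,1-2k+\alpha,\beta)$, now matching $a=\alpha$ and $b-a-1=-2k$, and collecting the factor $\frac{Q^{-1}(\varepsilon)}{\sqrt r}$ yields the second series.

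The main obstacle I anticipate is not the algebra but justifying the termwise integration. For $\mathcal{C}_1$ every summand is nonnegative, so Tonelli's theorem permits the interchange with no further work. For $\mathcal{C}_2$ the binomial coefficients alternate in sign, so I would instead invoke the absolute (hence uniform) convergence of the binomial series on $[0,1]$ — guaranteed by $\binom{\frac{1}{2}}{k}=\mathcal{O}(k^{-3/2})$ being summable, with convergence at the endpoint $u=1$ following from Abel's theorem — to pass the expectation through the sum. I would also note that each Kummer-$\textbf{U}$ integral converges for all the indices produced, since the integral representation only requires the first argument $\alpha>0$ and the third argument $\beta>0$, and remains valid even when the middle argument $1-2k+\alpha$ becomes negative.
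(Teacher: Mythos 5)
Your proposal follows essentially the same route as the paper's own proof: the same linear split into $\mathcal{C}_1$ and $\mathcal{C}_2$, the same globally convergent series $\ln(1+x)=\sum_{k\ge1}\frac{1}{k}\left(\frac{x}{1+x}\right)^{k}$ for the capacity term, the same generalized binomial expansion of $\sqrt{1-(1+\gamma)^{-2}}$ for the dispersion term, and the same identification of each resulting integral with the Kummer $\textbf{U}$ representation, with all parameters matched correctly. The only difference is that you explicitly justify the interchange of summation and integration (Tonelli for the nonnegative log series, absolute convergence of the binomial series for the dispersion term), a point the paper passes over silently; this is a welcome tightening rather than a different argument.
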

    \begin{proof}
     The instantaneous rate is given by 
    	\begin{flalign}
            R^*(\gamma,L,\varepsilon) \approx \text{C}(\gamma) - Q^{-1}(\varepsilon)\sqrt{ \frac{\text{V}(\gamma)}{r}},
             \label{achievable_rate_urllc_v2}  
        \end{flalign}
        where $\mathcal{O}\big(\frac{\log_2(r)}{r}\big)$ is ignored. To calculate the expected value of \eqref{achievable_rate_urllc_v2} in terms of the distribution of $\gamma$ we should compute the following 
        \begin{flalign}
            \bar{R}(L,\varepsilon) = \overset{\mathcal{C}_1}{\overbrace{\mathbb{E}[\log_2(1+\gamma)]}} - \frac{Q^{-1}(\varepsilon)}{\sqrt{r}}\overset{\mathcal{C}_2}{\overbrace{\mathbb{E}[\sqrt{\text{V}(\gamma)}] }} ,
            \label{expected_rate}
        \end{flalign}
        where we have used the linearity rule of the expectation. We investigate the two terms $\mathcal{C}_1$ and $\mathcal{C}_2$ involved in \eqref{expected_rate} separately. According to \eqref{expected_rate}, $\mathcal{C}_1$ is defined as 
        \begin{flalign}
            \mathcal{C}_1 = \mathbb{E}[\log_2(1+\gamma)] = \int_{0}^{\infty} \log_2(1+u)f_\gamma(u) \,du,
            \label{expectation_of_r1}
        \end{flalign}
        where $f_\gamma(u) = \frac{\beta^{\alpha}u^{\alpha-1}e^{-\beta u}}{\Gamma(\alpha)}$. Therefore, by evaluating the integral in \eqref{expectation_of_r1} and after mathematical manipulations it yields
        \begin{flalign}
            & \mathcal{C}_1 =  \mathbb{E}[\log_2 (1+\gamma)] = \int_{0}^{\infty} \log_2 (1+u)\frac{\beta^{\alpha}u^{\alpha-1}e^{-\beta u}}{\Gamma (\alpha)} \,du \nonumber \\
            & =  \frac{\beta ^{\alpha } \left(-\beta ^2\right)^{-\alpha }}{{ \Gamma (\alpha)\ln2}} \left\{  (-\beta)^{\alpha } \left[\beta  \Gamma \left(\alpha -1\right) \, _2F_2\left(1,1;2,2-\alpha ;\beta \right)+ \right. \right. \nonumber \\
            & \left. \left. \Gamma (\alpha) \left(\psi (\alpha)-\log_2 (\beta) \right) \right]+\pi  \beta ^{\alpha } \csc (\pi  \alpha) \left[\Gamma (\alpha)-\Gamma (\alpha ,-\beta) \right] \right\},
            \label{expectation_of_r1_v2}
        \end{flalign}
        where $\prescript{}{p}{\text{F}}_q(a;b;z)$ is the generalized hypergeometric function \cite[Eq. (9.1)]{gradshteyn2014table} and $\psi (\alpha ) = \frac{\Gamma^{'}(\alpha)}{\Gamma(\alpha)}$  gives the digamma function where they are standard built-in functions in most of the well-known mathematical software packages. However, we aim to write a more tractable solution for the $\mathcal{C}_1$ which involves simple computational complexity in terms of few diverse functions. To do so, first consider the following series representation for natural logarithm \cite{gradshteyn2014table}
        \begin{flalign}
            \ln(1+x) = \sum_{k=1}^{\infty}\frac{1}{k}\left(\frac{x}{x+1}\right)^k, \quad x\geq 0
            \label{natural_logarithm_series}
        \end{flalign}
        then, by substituting in \eqref{expectation_of_r1} we will have
        \begin{flalign}
            \int_{0}^{\infty}& \log_2(1+u)\frac{\beta^{\alpha}u^{\alpha-1}e^{-\beta u}}{\Gamma(\alpha)} \,du 
             \\
            \overset{a}{=} &  \frac{1}{\ln2} \sum_{k=1}^{\infty}\frac{1}{k}\int_{0}^{\infty}\left(\frac{u}{u+1}\right)^k\frac{\beta^{\alpha}u^{\alpha-1}e^{-\beta u}}{\Gamma(\alpha)} \,du \nonumber \\ \nonumber
            \overset{b}{=} & \frac{\beta^{\alpha}}{\Gamma(\alpha)\ln2} \sum_{k=1}^{\infty}\frac{1}{k}\Gamma(k+\alpha) \int_{0}^{\infty}\frac{(1+u)^{-k}u^{\alpha+k-1}e^{-\beta u}}{\Gamma(k+\alpha)}\,du,
        \end{flalign}
        where $a$ is done by exchanging the summation and integral and in $b$ a constant factor of $\Gamma(k+\alpha)$ is multiplied in numerator and denominator. We observe that the integral inside the summation of the last step is defined as the confluent hypergeometric Kummer U function  $\textbf{U}(a,b,z)$ \cite[Eq. (9.211)]{gradshteyn2014table} therefore
        \begin{flalign}
            & \mathcal{C}_1 = \frac{\beta^{\alpha}}{\Gamma(\alpha)\ln2} \sum_{k=1}^{\infty}\frac{1}{k}\Gamma(k+\alpha)\textbf{U}(k+\alpha,1+\alpha,\beta) , \label{r_1_summation_form}
        \end{flalign}
        \textcolor{black}{where the summation can be truncated to a finite number in practical numerical situations, e.g. $10^3$ with negligible mismatch}. It should be noted that \eqref{r_1_summation_form} computes only the confluent hypergeometric function and, well-known gamma function at each iteration of the summation, henceforth has less computational complexity in comparison with \eqref{expectation_of_r1_v2}.

        In the following we evaluate the expectation associated with $\mathcal{C}_2$ which is given by
        \begin{flalign}
            \mathcal{C}_2 = \mathbb{E}[\sqrt{\text{V}(\gamma)}] = \frac{1}{\ln2} \int_{0}^{\infty} \displaystyle \sqrt{1-\frac{1}{(1+u)^2}}f_\gamma(u) \,du,
            \label{expectation_of_r2}
        \end{flalign}
        to compute the above integral, we adopt the binomial expansion of the channel dispersion which is given by
        \begin{flalign}
            \sqrt{\text{V}(\gamma)} = \frac{1}{\ln2}\big( 1-\frac{1}{(1+\gamma)^2} \big)^{\frac{1}{2}} = \frac{1}{\ln2}\sum_{n=0}^{\infty}(-1)^n\binom{\frac{1}{2}}{n}(1+\gamma)^{-2n},\label{dispersion_binomial_series}
        \end{flalign}
        where $\binom{\frac{1}{2}}{n} = \frac{0.5(0.5-1)...(0.5-n+1)}{n!}$ for $n \neq 0$ and $\binom{\frac{1}{2}}{0} = 1$. Plugging \eqref{dispersion_binomial_series} in \eqref{expectation_of_r2} yields
        \begin{flalign}
            \mathcal{C}_2 = \frac{1}{\ln2}\int_{0}^{\infty} \sum_{n=0}^{\infty}(-1)^n\binom{\frac{1}{2}}{n}(1+u)^{-2n}f_\gamma(u) \,du,
            \label{r2_expectation_binomial_series}
        \end{flalign}
        where by replacing the integral and summation as well as substituting the definition of $f_\gamma(u)$, $\mathcal{C}_2$ can be rewritten as
        \begin{flalign}
            \mathcal{C}_2 = \frac{1}{\ln2}\beta^{\alpha}\sum_{n=0}^{\infty}(-1)^n\binom{\frac{1}{2}}{n}\int_{0}^{\infty}\frac{1}{\Gamma(\alpha)}(1+u)^{-2n} u^{\alpha-1}e^{-\beta u} \,du,
            \label{r2_expectation_binomial_series2}
        \end{flalign}
        we observe that the integral inside the summation can be reformulated in terms of confluent hypergeometric Kummer U function  $\textbf{U}(a,b,z)$ defined earlier, henceforth 
        \begin{flalign}
            \mathcal{C}_2 = \frac{1}{\ln2}\beta^{\alpha}\sum_{n=0}^{\infty}\binom{\frac{1}{2}}{n}(-1)^n\textbf{U}(\alpha,1-2n+\alpha,\beta).
            \label{r2_expectation_binomial_series3}
        \end{flalign}
        consequently, if we substitute the mathematical expressions obtained in \eqref{r_1_summation_form} and  \eqref{r2_expectation_binomial_series3} in the average rate formula in \eqref{expected_rate}, the final result will be obtained which completes the proof.
    \end{proof}
    
    It is worth mentioning that the average rate given in Theorem \ref{ergodic_rate_theo} involves evaluating high-computational complexity functions as well as infinite summations which may not be beneficial in practical situations and resource allocation algorithms. Therefore, we propose a tractable lower bound approximation for the average rate in the following corollary. 
    
    \begin{corollary}
        A tractable and closed-form approximate lowerbound expression for the average rate $\bar{R}(L,\varepsilon)$ is given by
        \begin{flalign}
            \bar{R}_{\text{LB}}(L,\varepsilon) \approx \Tilde{\mathcal{C}}_1 & - \frac{Q^{-1}(\varepsilon)}{\sqrt{r}}\Tilde{\mathcal{C}}_2 \label{Corr_1}  = \log_2\left(1+\frac{\alpha^2}{\beta(\alpha+1)}\right) \\ \nonumber & - \frac{Q^{-1}(\varepsilon)}{2\ln2\sqrt{r}} \left(2-\beta +e^{\beta } \beta  (\alpha +\beta -1) \text{E}_{\alpha }(\beta )\right),
        \end{flalign}
        where $\text{E}_n(z)$ is the exponential integral function \cite[Eq. (8.211)]{gradshteyn2014table},  $\Tilde{\mathcal{C}}_1$ and $\Tilde{\mathcal{C}}_2$ are given in \eqref{Jensen_v2} and \eqref{expectation_of_r2_approximate_dispersion},  respectively.
    \end{corollary}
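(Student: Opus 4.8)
The plan is to bound the two expectations appearing in \eqref{expected_rate}, namely $\mathcal{C}_1=\mathbb{E}[\log_2(1+\gamma)]$ and $\mathcal{C}_2=\mathbb{E}[\sqrt{\text{V}(\gamma)}]$, separately by replacing each with a closed-form surrogate $\Tilde{\mathcal{C}}_1$ and $\Tilde{\mathcal{C}}_2$ that depends only on the Gamma parameters, and then recombining them as $\bar{R}_{\text{LB}}=\Tilde{\mathcal{C}}_1-\frac{Q^{-1}(\varepsilon)}{\sqrt{r}}\Tilde{\mathcal{C}}_2$. Throughout I would use the Gamma moments $\mathbb{E}[\gamma]=\alpha/\beta$ and $\mathbb{V}[\gamma]=\alpha/\beta^2$ implied by Theorem \ref{SNR_distribution_theorem}.

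For $\Tilde{\mathcal{C}}_1$ I would produce the lower bound via Jensen's inequality. Writing $\log_2(1+\gamma)=\phi(1/\gamma)$ with $\phi(x)=\log_2(1+1/x)$, a short computation gives $\phi''(x)=\frac{2x+1}{\ln 2\,[x(x+1)]^2}>0$ on $(0,\infty)$, so $\phi$ is convex and Jensen yields $\mathcal{C}_1=\mathbb{E}[\phi(1/\gamma)]\ge \phi(\mathbb{E}[1/\gamma])=\log_2(1+1/\mathbb{E}[1/\gamma])$. It then remains to evaluate $\mathbb{E}[1/\gamma]$. Using the second-order delta-method expansion $\mathbb{E}[1/\gamma]\approx \frac{1}{\mathbb{E}[\gamma]}+\frac{\mathbb{V}[\gamma]}{(\mathbb{E}[\gamma])^{3}}$ and substituting the Gamma moments gives $\mathbb{E}[1/\gamma]\approx \beta(\alpha+1)/\alpha^{2}$, hence $1/\mathbb{E}[1/\gamma]\approx \alpha^{2}/(\beta(\alpha+1))$ and $\Tilde{\mathcal{C}}_1=\log_2(1+\frac{\alpha^2}{\beta(\alpha+1)})$, matching the first term of the claim.

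For $\Tilde{\mathcal{C}}_2$ I would start from the binomial series \eqref{dispersion_binomial_series} for $\sqrt{\text{V}(\gamma)}$ and retain only its first two terms ($n=0,1$), i.e. $\sqrt{\text{V}(\gamma)}\approx \frac{1}{\ln 2}(1-\frac{1}{2}(1+\gamma)^{-2})$, which is accurate because the neglected terms decay rapidly for the SNR range of interest. Taking the expectation reduces the problem to the single moment $\mathbb{E}[(1+\gamma)^{-2}]=\int_0^\infty (1+u)^{-2}\frac{\beta^\alpha u^{\alpha-1}e^{-\beta u}}{\Gamma(\alpha)}\,du$. The main computational step is to evaluate this integral in closed form: after the substitution $t=1+u$ and an integration by parts that lowers the power $(1+u)^{-2}$, the residual $(1+u)^{-1}$-type integrals can be written through the generalized exponential integral $\text{E}_\alpha(\cdot)$, and its recurrence collapses everything to $\mathbb{E}[(1+\gamma)^{-2}]=\beta-e^{\beta}\beta(\alpha+\beta-1)\text{E}_\alpha(\beta)$. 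Substituting back gives $\Tilde{\mathcal{C}}_2=\frac{1}{2\ln2}(2-\beta+e^{\beta}\beta(\alpha+\beta-1)\text{E}_\alpha(\beta))$.

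I expect the closed-form evaluation of $\mathbb{E}[(1+\gamma)^{-2}]$ to be the main obstacle, since it requires correctly handling the non-integer order $\alpha$ of the exponential integral and justifying the recurrence-based simplification; a useful sanity check is the case $\alpha=1$, where the identity reduces to the standard relation $\text{E}_2(\beta)+\beta\text{E}_1(\beta)=e^{-\beta}$. A secondary point is that both $\Tilde{\mathcal{C}}_1$ and $\Tilde{\mathcal{C}}_2$ are only \emph{approximate} surrogates (the delta-method step and the binomial truncation each introduce a controlled error), which is precisely why the corollary is stated with $\approx$ rather than as a strict inequality. Finally, inserting $\Tilde{\mathcal{C}}_1$ and $\Tilde{\mathcal{C}}_2$ into the structure of \eqref{expected_rate} assembles the stated closed-form expression for $\bar{R}_{\text{LB}}(L,\varepsilon)$.
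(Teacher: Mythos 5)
Your proposal matches the paper's own proof essentially step for step: Jensen's inequality on $\log_2(1+1/x)$ combined with the second-order Taylor/delta-method approximation of $\mathbb{E}[\gamma^{-1}]$ gives $\Tilde{\mathcal{C}}_1$, and the two-term binomial truncation of $\sqrt{\text{V}(\gamma)}$ followed by an integration-by-parts evaluation of $\mathbb{E}[(1+\gamma)^{-2}]$ via $\text{E}_\alpha(\beta)$ gives $\Tilde{\mathcal{C}}_2$, exactly as in \eqref{Jensen_v2} and \eqref{expectation_of_r2_approximate_dispersion}. Your explicit convexity check and the $\alpha=1$ sanity check on the exponential-integral identity are correct additions but do not change the route.
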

    \begin{proof}
    First, we study the term involving Shannon capacity, i.e. $\mathcal{C}_1=\mathbb{E}[\log_2(1+\gamma)]$. Invoking  Jensen's inequality \cite{Larsson2015} we have
    \begin{flalign}
        \mathcal{C}_1 = \mathbb{E}[\log_2(1+\gamma)] \geq \log_2(1+\frac{1}{\mathbb{E}[\gamma^{-1}]}),
        \label{Jensen}
    \end{flalign}
    where to arrive in a lower-bound we should elaborate the right-hand of \eqref{Jensen}. To continue, we apply Taylor series expansion of $\frac{1}{\gamma}$ then, take average from both sides which yields
    \begin{flalign}
        \mathbb{E}[\gamma^{-1}] \approx \frac{1}{\mathbb{E}[\gamma]} + \frac{\mathbb{V}[\gamma]}{\mathbb{E}[\gamma]^3} = \frac{\mathbb{E}[\gamma^2]}{\mathbb{E}[\gamma]^3},
        \label{taylor_series}
    \end{flalign}
    substituting \eqref{taylor_series} in \eqref{Jensen} yields
    \begin{flalign}
        \mathcal{C}_1  \geq \tilde{\mathcal{C}}_1 =  \log_2(1+\frac{\mathbb{E}[\gamma]^3}{\mathbb{E}[\gamma^2]}) =\log_2\left(1+\frac{\alpha^2}{\beta(\alpha+1)}\right),
        \label{Jensen_v2}
    \end{flalign}
    where $\mathbb{E}[\gamma]^3$ and $\mathbb{E}[\gamma^2]$ can be evaluated straightforward from the SNR distribution $\gamma \sim \Gamma(\alpha,\beta)$ such that $\mathbb{E}[\gamma] = \frac{\alpha}{\beta}$ and $\mathbb{E}[\gamma^2] = \frac{\alpha(\alpha+1)}{\beta^2}$ where $\alpha$ and $\beta$ are investigated in Theorem \ref{SNR_distribution_theorem}. It is worth mentioning that another approximation for $\mathcal{C}_1$ can be written based on truncated Taylor series expansion of $\ln(1+\gamma)$ as given by
    \begin{flalign}
        \ln(1+\gamma) \approx \ln(1+\gamma_0)+\frac{1}{1+\gamma_0}(\gamma-\gamma_0)-\frac{1}{2(1+\gamma_0)^2}(\gamma-\gamma_0)^2,
    \end{flalign}
    then, by replacing $\gamma_0=\mathbb{E}[\gamma]$ and taking average from both sides which yields
    \begin{flalign}
        \mathcal{C}_1 = \mathbb{E}[\log_2(1+\gamma)] \approx  \log_2(1+\mathbb{E}[\gamma])-\frac{\mathbb{V}[\gamma]}{2(1+\mathbb{E}[\gamma])\ln2},
        \label{another_approx_r2}
    \end{flalign}
    however, using the above expression does not give either lowerbound or upperbound for $\mathcal{C}_1$ therefore, for comparison purposes the lowerbound given in \eqref{Jensen_v2} is advantageous. \\

    Next, we investigate $\mathcal{C}_2$ in \eqref{expectation_of_r2}. To do so, we apply the truncated binomial approximation $(1+x)^\vartheta \approx 1+\vartheta x$ for $|x| < 1$, $|\vartheta x| < 1$ to approximate the channel dispersion as $\sqrt{\text{V}(u)}=\frac{1}{\ln2}\sqrt{1-\frac{1}{(1+u)^2}} \approx \frac{1}{\ln2}(1-\frac{1}{2(1+u)^2})$. 
    By substituting the approximate dispersion expression in \eqref{expectation_of_r2} we will have
    \begin{flalign}
        \tilde{\mathcal{C}}_2 = & \mathbb{E}[\sqrt{\text{V}(\gamma)}] \approx \frac{1}{\ln2} \int_{0}^{\infty} \displaystyle (1-\frac{1}{2(1+u)^2})f_\gamma(u) \,du \nonumber \\  & =\frac{1}{\ln2} \int_{0}^{\infty} \displaystyle (1-\frac{1}{2(1+u)^2})\frac{\beta^\alpha u^{\alpha-1}e^{-\beta u}}{\Gamma(\alpha)} \,du  \nonumber \\  
        & = \frac{1}{\ln2}\left(1-\frac{\beta^\alpha}{2}\int_{0}^{\infty} \displaystyle \frac{1}{(1+u)^2}\frac{u^{\alpha-1}e^{-\beta u}}{\Gamma(\alpha)} \,du\right) \nonumber \\  
        &\overset{a}{=} \frac{1}{2\ln2} \left(2-\beta +e^{\beta } \beta  (\alpha +\beta -1) \text{E}_{\alpha }(\beta )\right).
            \label{expectation_of_r2_approximate_dispersion}
    \end{flalign}
    where $a$ is obtained through integrating by part and some manipulations. Finally, by replacing \eqref{Jensen_v2} and \eqref{expectation_of_r2_approximate_dispersion} in \eqref{expected_rate} the approximate result will be obtained. Note that since $\mathcal{C}_1$ is lower bounded by $\tilde{\mathcal{C}}_1$ and $\mathcal{C}_2 \leq \tilde{\mathcal{C}}_2$ the achievable rate will attain its lower bound because of negative sign in $\mathcal{C}_2$ in \eqref{Corr_1}. 
    \end{proof}
    
    \subsection{Average Decoding Error Probability}
    In order to compute the average decoding error probability one needs to evaluate the following
    \begin{flalign}
        \bar{\varepsilon} \approx \mathbb{E}\Big[Q\big(\sqrt{\frac{r}{V(\gamma)}}(\log_2(1+\gamma)-\frac{L}{r})\big)\Big],
        \label{error_average}
    \end{flalign}
    where the expected value is taken over the distribution of $\gamma$ and, still $r$ is large enough to have a better accuracy. It should be noted that directly computing \eqref{error_average} is intractable to achieve a closed-form solution. Therefore, we adopt a linear approximation to the function inside the the expected value as
    \begin{flalign}
         Q\left(\sqrt{\frac{r}{V(\gamma)}}(\log_2(1+\gamma)-\frac{L}{r}) \right) \approx \mathsf{g}(\gamma)
    \end{flalign}
    where $\mathsf{g}(\gamma)$ is given by \cite{Makki2014}
    \begin{flalign}
        \mathsf{g}(\gamma) = 
        \begin{cases} 
          1, & \gamma \in [0,\kappa_0) \\
          \frac{1}{2}+\xi_0(\gamma- \xi_1), &  \gamma \in [\kappa_0,\kappa_1] \\
          0. & \gamma \in [\kappa_1,\infty) 
       \end{cases}
    \end{flalign}
    where $ \xi_0 = -\sqrt{\frac{r}{2\pi(2^{\frac{2L}{r}}-1)}}$, $ \xi_1=2^\frac{L}{r}-1$, $ \kappa_0=\xi_1+\frac{1}{2\xi_0}$ and $\kappa_1=\xi_1-\frac{1}{2\xi_0}$. Consequently, the average error probability defined in \eqref{error_average} will be 
    \begin{flalign}
        \bar{\varepsilon} = & \int_{0}^{\infty} Q\left(\sqrt{\frac{r}{V(u)}}(\log_2(1+u)-\frac{L}{r})\right) f_\gamma(u)\,du 
        \nonumber \\  
        & \approx \int_{0}^{\kappa_0} f_\gamma(u)\,du + \int^{\kappa_1}_{\kappa_0} \left(\frac{1}{2}+\xi_0(u- \xi_1)\right)f_\gamma(u)\,du \nonumber \\
        = &  (\frac{1}{2}+\xi_0\xi_1)F_\gamma(\kappa_0) + (\frac{1}{2}-\xi_0\xi_1)F_\gamma(\kappa_1) + \xi_0 \int^{\kappa_1}_{\kappa_0} u f_\gamma(u)\,du \nonumber  \\ \nonumber 
        = & \frac{\xi_0}{\beta  \Gamma (\alpha )} \bigg( \beta (\kappa_0 - \kappa_1)  \Gamma (\alpha )-\beta \kappa_0 \Gamma (\alpha ,\kappa_0 \beta )+\beta  \kappa_1 \Gamma (\alpha ,\kappa_1 \beta ) \nonumber \\  & \quad \quad \quad \quad \quad  +\Gamma (\alpha +1,\kappa_0 \beta )-\Gamma (\alpha +1,\kappa_1 \beta )  \bigg),
        \label{epsilon_average}
    \end{flalign}
    where $F_\gamma(\gamma)=\frac{\gamma(\alpha,\beta)}{\Gamma(\alpha)}$ is the cumulative distribution function of the SNR that is investigated in Theorem \ref{SNR_distribution_theorem} and $\gamma(a,b)$, $\Gamma(a,b)$ are the lower and upper incomplete gamma functions, respectively \cite{gradshteyn2014table}. Consequently, \eqref{epsilon_average} gives a closed form formula for the average error probability. 
    
    \subsection{The Required Number of Channel Blocklengths}
    \textcolor{black}{By solving a quadratic equation in terms of $\varpi = {\sqrt{r}}$ in \eqref{achievable_rate_urllc} and ignoring the higher order terms $\mathcal{O}\big(\frac{\log_2(r)}{r}\big)$, an estimate of the average number of channel blocklengths $\bar{r}=\mathbb{E}[r]$ in terms of number of RIS elements $N$ and phase shifts will be obtained as follows
    \begin{flalign}
        \bar{r} & \approx \label{estimate_r} \frac{L}{\mathcal{C}_1} + \frac{\left(Q^{-1}(\varepsilon)\mathcal{C}_2\right)^2}{2\mathcal{C}_1^2}+\frac{Q^{-1}(\varepsilon)\mathcal{C}_2\sqrt{\left(Q^{-1}(\varepsilon)\mathcal{C}_2\right)^2+4\mathcal{C}_1L}}{4\mathcal{C}_1^2}, 
    \end{flalign}
    therefore, $\bar{r}$ can be evaluated straightforwardly after determining $\mathcal{C}_1$ and  $\mathcal{C}_2$ that were investigated in earlier sections. 
    \begin{remark}
    \label{remark_channel_use}
        The average channel blocklength $\bar{r}$ is a decreasing function in terms of the total number of RIS elements $N$.
    \end{remark}
    \begin{proof}
    Please refer to Appendix \ref{appendixC}.
    \end{proof}
    Consequently, according to Remark \ref{remark_channel_use}, the average channel use $\bar{r}$ in \eqref{estimate_r} reduces with respect to increase in the RIS elements $N$ that results in less required channel blocklengths for transmission.} In other words, given the channel blocklength which is defined as $r=TW$ (which $T$ is the transmission duration and $W$ denotes the available bandwidth) we can expect that when total the number of RIS elements increases the required blocklength to transmit the symbols will be reduced. \textcolor{black}{This results in lower transmission time as well as less power consumption at the AP which highlights the suitability of RIS at URLLC systems in the FBL regime.}


    \section{Impact of Phase Error}
    In this section we investigate the performance loss due to the presence of phase error in RIS elements which \textcolor{black}{may occur due to quantization of the RIS elements' phase according to the number of bits assigned to each discrete phase by the controller} or hardware impairments \cite{Wu2020b,DiRenzo2020}. \textcolor{black}{Considering either hardware impairments or quantization error the resultant impact will be on the distribution of $\phi_n$, $\forall n \in \mathscr{N}$. Therefore, the SNR parameters namely $\alpha$ and $\beta$ are needed to recompute. Henceforth, we propose a general framework to have a clear understanding of this effect. Different distributions can be analyzed as straightforward without loss of generality.}

    Let us assume that there are only $Q$ quantization levels in which $b = \log_2(Q)$ bits are assigned to each discrete phase shift. The RIS chooses each phase shift from the following set
    \begin{flalign}
        \theta_n \in \Theta = \big\{-\pi,-\pi+\Delta,-\pi+2\Delta,...,-\pi+(Q-1)\Delta\big\}, \smallskip \forall n \in \mathcal{N}
        \label{phi_set}
    \end{flalign}
    where $\Delta = \frac{\pi}{2^{b-1}}$. It should be noted that a linear quantizer has an error $e$ which \textcolor{black}{spans} uniformly over $\frac{-\Delta}{2}\leq e\leq \frac{\Delta}{2}$. Let us denote the phase error as $\phi_n \sim \mathcal{U}(-\epsilon\pi,\epsilon \pi)$, $\forall n \in \mathcal{N}$ where $\epsilon = \frac{1}{2^b} \in (0,1]$. Additionally, the phase errors of each element are independently distributed. In this case, we should compute the expected values of \eqref{expected_value_3} and \eqref{appndx_2} in terms of the phase error distribution. Therefore, the following expressions are needed for $\forall n\neq m\neq n'$
    \begin{subequations}
    \begin{flalign}
        \label{eq_start}
        &\mathbb{E} [\cos(\phi_n)] = \sinc(\epsilon),\\
        &\mathbb{E} [\cos^2(\phi_n)] =\frac{1}{2} + \frac{\sinc(2\epsilon)}{2}, \\
        &\mathbb{E} [\cos(\phi_n-\phi_m)] = \sinc^2(\epsilon),  \\
        &\mathbb{E} [\cos^2(\phi_n-\phi_m)] = \frac{1}{2}+\frac{\sinc^2(2\epsilon)}{2},  \\
        &\mathbb{E} [\cos(\phi_n)\cos(\phi_n-\phi_m)] = \sinc(\epsilon)(\frac{1}{2}+\frac{\sinc(2\epsilon)}{2}),  \\
        &\mathbb{E} [\cos(\phi_n-\phi_m)\cos(\phi_{n'}-\phi_n)] = \frac{1}{2}(1+\sinc(2\epsilon))\sinc^2(\epsilon),  \label{eq_end}
    \end{flalign}
    \end{subequations}
    consequently, the average rate and error probability will be achieved after substituting \eqref{eq_start}--\eqref{eq_end} in \eqref{expected_value_3} and \eqref{appndx_2} which yields
    \begin{flalign}
        \mathbb{E}[\gamma] = & \rho\big(\eta^{\text{AP}\rightarrow \text{AC}} + N \eta^{\text{AP}\rightarrow\text{RIS}} \eta^{\text{RIS}\rightarrow\text{AC}} \nonumber \\ & + \frac{\pi^2\sinc^2(\epsilon)N(N-1)}{16}\eta^{\text{AP} \rightarrow\text{RIS}} \eta^{\text{RIS}\rightarrow\text{AC}}  \nonumber \\ & +\frac{N\pi\sinc(\epsilon)}{4}\sqrt{\pi\eta^{\text{AP}\rightarrow \text{AC}}\eta^{\text{AP} \rightarrow\text{RIS}} \eta^{\text{RIS}\rightarrow\text{AC}}}  \big),
        \label{E_gamma_phase_error}
    \end{flalign}
    \begin{figure*}
    \begin{flalign}
        \mathbb{E}[\gamma^2] = 
        \rho^2 \bigg[& 2\varsigma^2 + N\varsigma\varrho\vartheta\Big(4 + 2\sinc(2\epsilon)  + \frac{3\pi^2(N-1)}{8}\sinc^2(\epsilon) \Big)  +   \sqrt{\varsigma^3\varrho\vartheta} \Big( \frac{3\pi\sqrt{\pi}N}{4}\sinc(\epsilon) \Big) 
        \nonumber \\ & +N\varrho^2\vartheta^2 \bigg(N+3+\frac{\pi^2}{16}(N-1)(2N+5)\sinc^2(\epsilon) + (N-1)\sinc^2(2\epsilon) \nonumber \\ 
        & + \frac{\pi^2}{8} (N-1)(N-2)\sinc^2(\epsilon)(1+\sinc(2\epsilon))  +\frac{\pi^4}{2^8} \sinc^4(\epsilon)(N-1)(N-2)(N-3) \bigg)
        \nonumber \\ 
        &
        + N\sqrt{\varsigma\varrho^3\vartheta^3}\sinc(\epsilon) \pi \sqrt{\pi}\Big( \frac{8N+1}{8} +  \frac{(N-1)}{2}\big(\sinc(2\epsilon) +\frac{(N-2)\pi^2}{16}\sinc^2(\epsilon)\big) \Big) \bigg] ,
        \label{uniform_phase_error_EX2}
    \end{flalign}
    \hrule
    \end{figure*}
    and $\mathbb{E}[\gamma^2]$ is given in \eqref{uniform_phase_error_EX2} on top of the next page. Note that as in Appendix \ref{appendixA} we have $\varsigma = \eta^{\text{AP} \rightarrow \text{AC}}$, $\varrho = \eta^{\text{AP} \rightarrow \text{RIS}}$ and $\vartheta = \eta^{\text{RIS} \rightarrow \text{AC}}$. By noting that $\gamma \sim \Gamma(\alpha,\beta)$ where $\alpha=\frac{(\mathbb{E}[\gamma])^2}{\mathbb{E}[\gamma^2]-(\mathbb{E}[\gamma])^2}$, $\beta=\frac{\mathbb{E}[\gamma]}{\mathbb{E}[\gamma^2]-(\mathbb{E}[\gamma])^2}$ the average rate and error probability can be evaluated simply using Theorem \ref{ergodic_rate_theo} and \eqref{epsilon_average}, respectively.
    
    \textcolor{black}{\begin{remark}
    For sufficiently large number of RIS elements $N >>1$, the mean value and variance of $\gamma$ will be approximated as
    \begin{flalign}
        \mathbb{E}[\gamma]&_{N >> 1}  \approx \rho\varsigma+\frac{\rho}{16} N^2 \varrho\vartheta \pi^2\sinc^2(\epsilon), \nonumber \\ 
        \mathbb{V}[\gamma]&_{N >> 1} \approx   \nonumber \\ \nonumber &    64 N^3  \rho^2 \varrho^2\vartheta^2 \pi^2 \sinc^2(\epsilon) \big(8+\sinc(2\epsilon)+\pi^2\sinc^2(\epsilon)\big) \\ \nonumber & +  \rho^2\varsigma^2 + \frac{1}{4}N\pi^{1.5}\varsigma^{1.5}(\varrho\vartheta)^{1.5}\sinc(\epsilon) \nonumber \\ & + \mathcal{O}\left(N^{n_0}\varsigma^{n_1}\varrho^{n_2}\vartheta^{n_3}\right). \nonumber 
    \end{flalign}
    where $n_1+n_2+n_3 > 3, \enskip n_0 \in \{1,2\}$ indicate the power exponents of non-dominant terms. As a special case we have $\mathbb{V}[\gamma]_{\text{no direct channel}} << \mathbb{V}[\gamma]_{\text{with direct channel}}$. Since the direct channel path loss is slightly lower than the cascaded channel path loss, i.e.,  $\varsigma >>\varrho \vartheta$.
    \end{remark}}

    \section{Performance Evaluation}
	In what follows, we evaluate the proposed derivations and mathematical expressions numerically. Table \ref{table2} shows the chosen default values for the network parameters and geometry. Since the carrier frequency is 1900 MHz a typical antenna size will be around 15 cm. Therefore, the far-field assumption holds true in all simulation scenarios for the given network geometry.   
	\begin{table}[t]
		\caption{Simulation parameters.}
		\centering
		\begin{tabular}{ l  l }
			\hline
			Parameter & Default value \\ \hline
			RIS location in 2D plane & ($d$,10) m, $d \in [5,95]$ \\ AP location & (0,0) m \\
			AC position & \textcolor{black}{(100,0) m} \\ AP transmit power ($p$)  & 200 mW  \\ 
			Receiver noise figure (NF)  & 3  dB \\ \textcolor{black}{Target error probability ($\varepsilon$)}  & $10^{-9}$ \\
			$\phi_n, \forall n \in \mathcal{N}$ & $\sim \mathcal{U}[-\frac{\pi}{4},\frac{\pi}{4}]$ \\ Noise power density ($N_0$) & -174 dBm/Hz \\ 
			Channel blocklength ($r$) & $300$ \\ The size of packets ($L$) & $240$ bits ($\frac{L}{r}=0.8$)   \\ 
			Number of realizations & $10^4$ \\ Bandwidth ($W$) &  200 kHz \\ 
			Carrier frequency & 1900 MHz \\ AP height &  12.5 m \\ 
			\shortstack[l]{Path loss model \\ ($D$: distance in meter)} & $\text{PL(dB)}=34.53+38\log_{10}(D)$ \\ \hline
		\end{tabular}
		\label{table2}
	\end{table}
	In Fig. \ref{fig:SNR_CDFa} and Fig. \ref{fig:SNR_CDFb} the cumulative distribution function \textcolor{black}{(CDF)} of the received SNR is illustrated for two cases namely as in the presence of the direct channel between the AP and the AC and the case where there is no direct link. Besides, the results are shown  when perfect phase alignment is performed at the RIS which is referred to as $\phi_n=0, \forall n \in \mathcal{N}$ and comparison is made with uniform distributed phase noise at the RIS as a benchmark. As can be observed from Fig. \ref{fig:SNR_CDFa}, Monte Carlo simulations conform to the matched Gamma distribution for SNR. Furthermore, the impact of quantizer noise is shown in Fig. \ref{fig:SNR_CDFa}. Although there is a performance gap between leveraging $n$-bit quantizer ($n=1,2,3$) and the optimal case, \textcolor{black}{a 2-bit quantizer provides a good trade-off between performance and training overhead as the gap is $0.9$ dB (compared to $3.9$ dB gap with 1-bit) \cite{Wu2020c}. The 3-bit quantizer is more closer to the perfect case with small SNR degradation. It can be inferred that utilizing a lower-bit quantizer leads to reduction in RIS expenditure since higher bits means higher capacity requirement of the control signalling exchange for phase adjustment of the RIS elements, and hence higher control overhead. Henceforth, acquiring lower bit quantizers is mandatory.}

	In Fig. \ref{fig:SNR_CDFb} a comparison is shown between three scenarios, \textcolor{black}{I) no direct link and the RIS acts as a simple reflector, II) no direct link with phase adjusted RIS and III) with direct link and phase adjusted RIS}. As we observe, the only case which has a sharp slope and higher tightness is scenario II. It means that the values of SNR that we expect to receive in scenario II are almost in the same range whereas the range of fluctuations in cases I and III which are denoted as intervals of $*$ and $**$, respectively, is noticeable that results in eroding the perception of \textit{reliability}. \textcolor{black}{Therefore, though the SNR value is higher with the direct link, the variation in the SNR implies that the channel is less deterministic compared to case II. Hence the RIS is beneficial in guaranteeing a high reliability even if the direct link is absent (e.g., blocked). Furthermore, we observe that in the presence of the direct channel there is negligible difference between leveraging a 1-bit quantizer with a 2-bit in Fig. \ref{fig:SNR_CDFb}. This is because the direct channel is several orders of magnitude stronger than the reflected channel so that the effect of reflected channel is not dominant. }
	\begin{figure*}[t]
		\centering
		\begin{subfigure}[b]{0.49\textwidth}  
			\centering
			\includegraphics[trim = 8.5cm 8.5cm 8.5cm 8.5cm,scale=0.6]{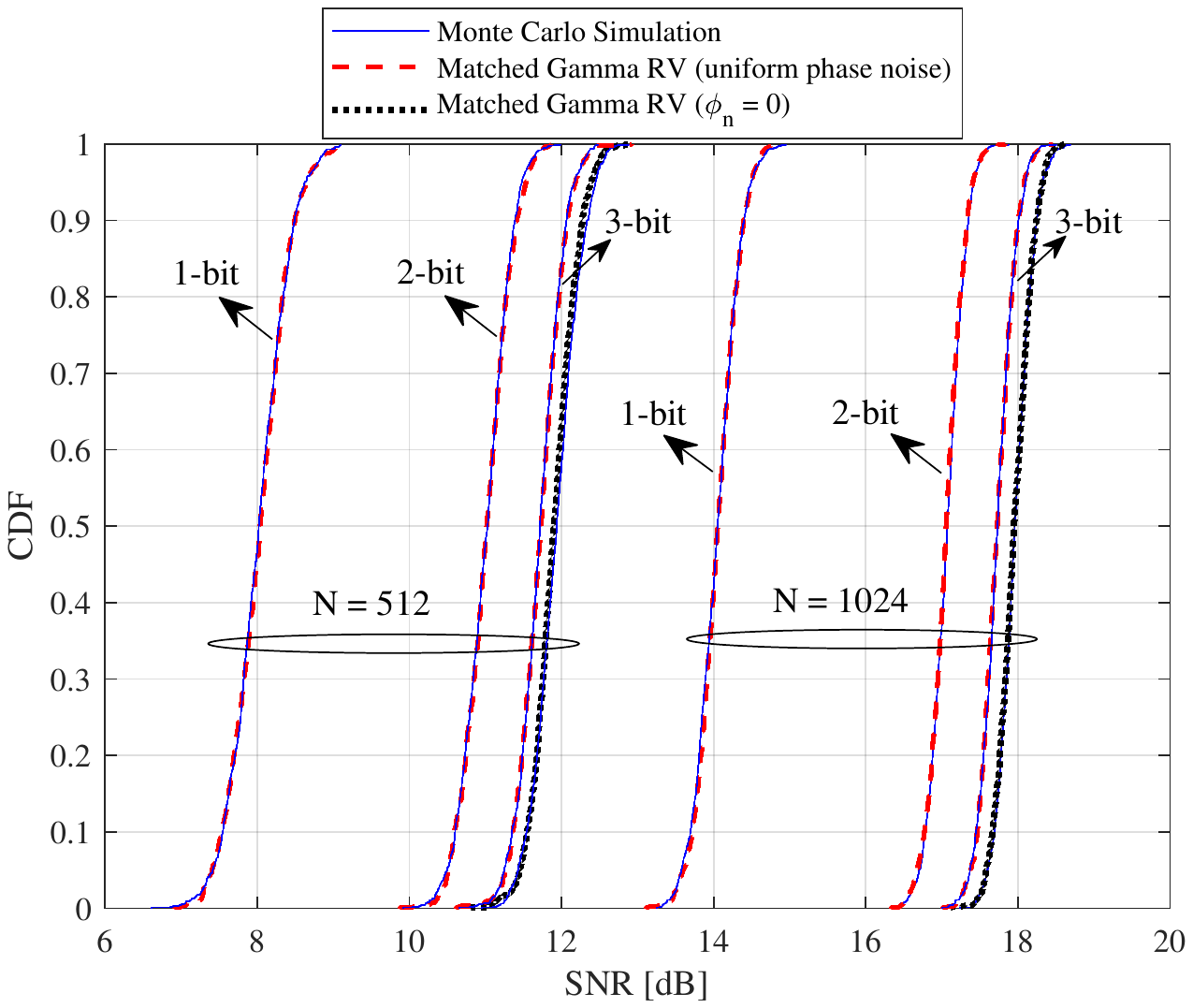}
			\caption{SNR CDFs without direct link.}
			\label{fig:SNR_CDFa}
		\end{subfigure}
		\begin{subfigure}[b]{0.5\textwidth}
			\centering    
			\includegraphics[trim = 8.5cm 8.5cm 8.5cm 8.5cm,scale=0.6]{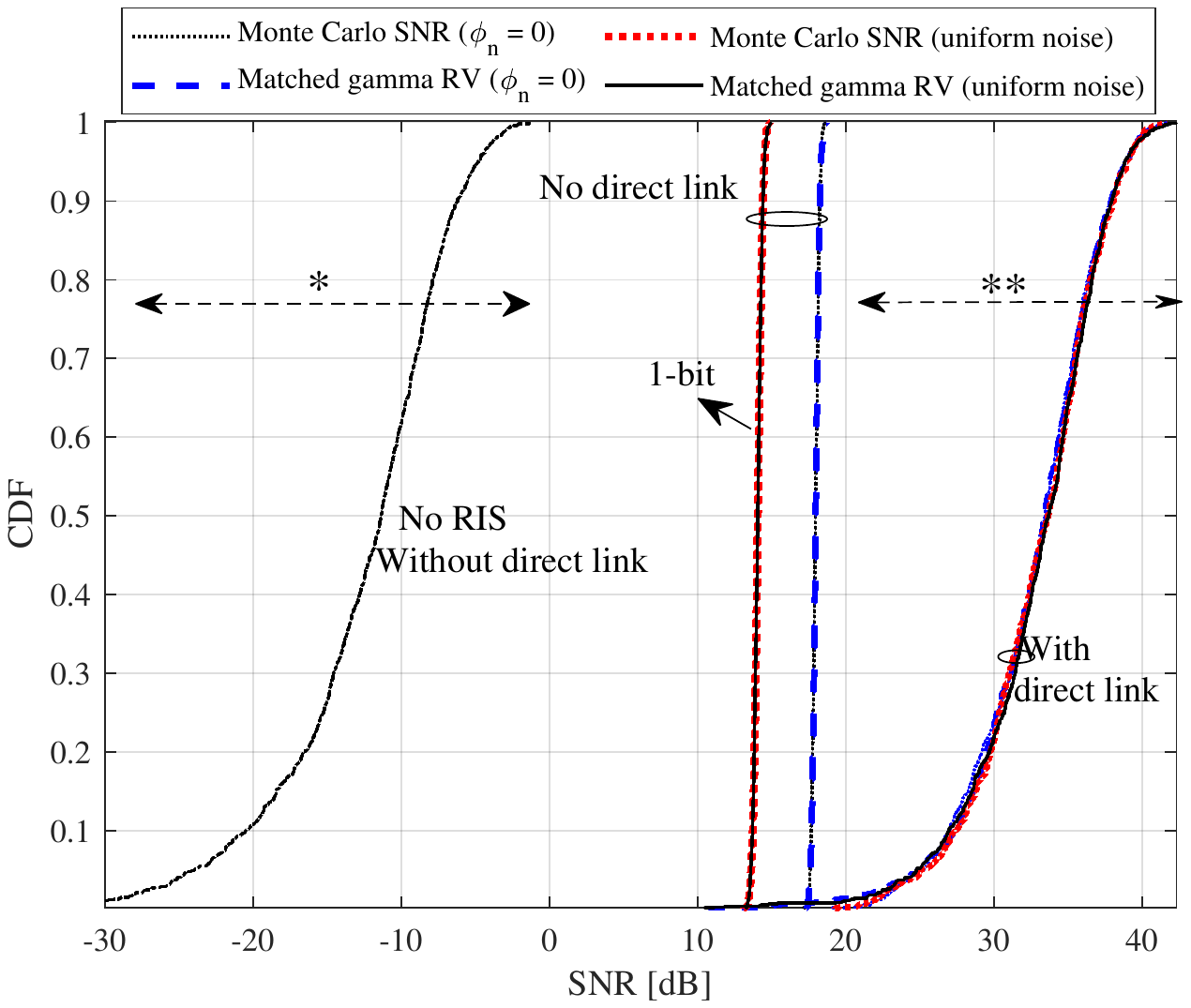}
			\caption{Comparison of SNR CDFs ($N = 1024$).}
			\label{fig:SNR_CDFb}
		\end{subfigure}
		\caption{The CDF curves of SNR and illustration of matched Gamma RV with Monte Carlo simulations.}
		\label{fig:SNR_CDF}
	\end{figure*}

    By evaluating the average achievable rate as well as taking into account the channel dispersion, a 2-bit quantizer is compared with a 1-bit quantizer in Fig. \ref{fig:ErgodicRate} along with a optimal phase setting. The results also confirm that Monte Carlo simulations approximate to the derived analytical expressions for the average rate in Theorem \ref{ergodic_rate_theo}. As it is observed, when the number of bits assigned to each discrete phase at the RIS increments, the average rate curve is very close to the optimal phase alignment case. Once more, this shows that to achieve satisfactory accuracy, a few numbers of available bits will be sufficient  instead of high precision and high complexity quantizers. Moreover, it is proved that to reach full diversity in RIS-aided communications the number of quantization levels must be $Q\geq3$ \cite{Xu2020b} that holds when $b\geq2$. Furthermore, the Shannon capacity and the gap with FBL regime is illustrated in Fig. \ref{fig:ErgodicRate}. We observe that the gap is increased until some saturation value. This is because of asymptotically converging the channel dispersion to its upperbound when the number of RIS elements increases which results in improving the SNR in other words $\lim_{\gamma \rightarrow \infty}\text{V}(\gamma) = (\log_2(e))^2$.
   	\begin{figure}[t]
		\centering
		\includegraphics[trim = 8.5cm 8.5cm 8.5cm 8.5cm,scale=0.6]{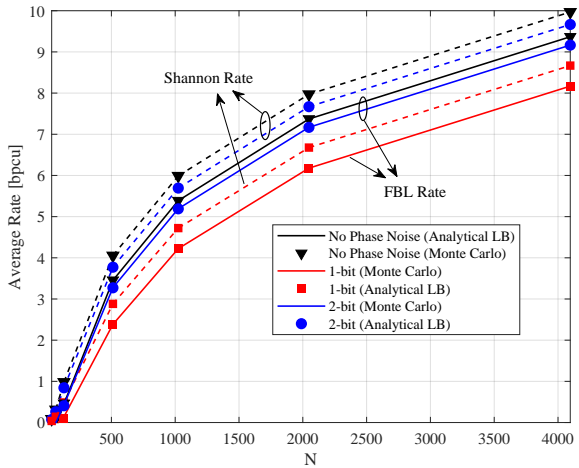}
		\caption{The average rate in terms of changing the total RIS elements without direct channel.}
		\label{fig:ErgodicRate}
    \end{figure}
    
    Next, we show the results for average error probability given in Fig. \ref{fig:ErrorProbability} when there is no direct channel. As we observe, to achieve a desired error probability the required number of RIS elements is much higher when there is phase error at the RIS compared with optimal phase setting. This shows the importance of phase alignment precision in the RIS particularly in URLLC applications. For instance, when we desire to reach an error probability of $10^{-9}$, the number of RIS elements satisfying this condition should be at least 190 elements in a perfect phase alignment scenario whereas in case of having phase errors the required number of RIS elements should be at least 290 elements in a 1-bit quantizer \textcolor{black}{and about 200 with 2-bit quantizer}. On the other hand, one may interpret these curves as a baseline for design considerations of how many RIS elements should be installed to reach sufficiently low bit error rate.  
   	\begin{figure}[t]
		\centering
		\includegraphics[trim = 9.5cm 9.5cm 9.5cm 9.5cm,scale=0.62]{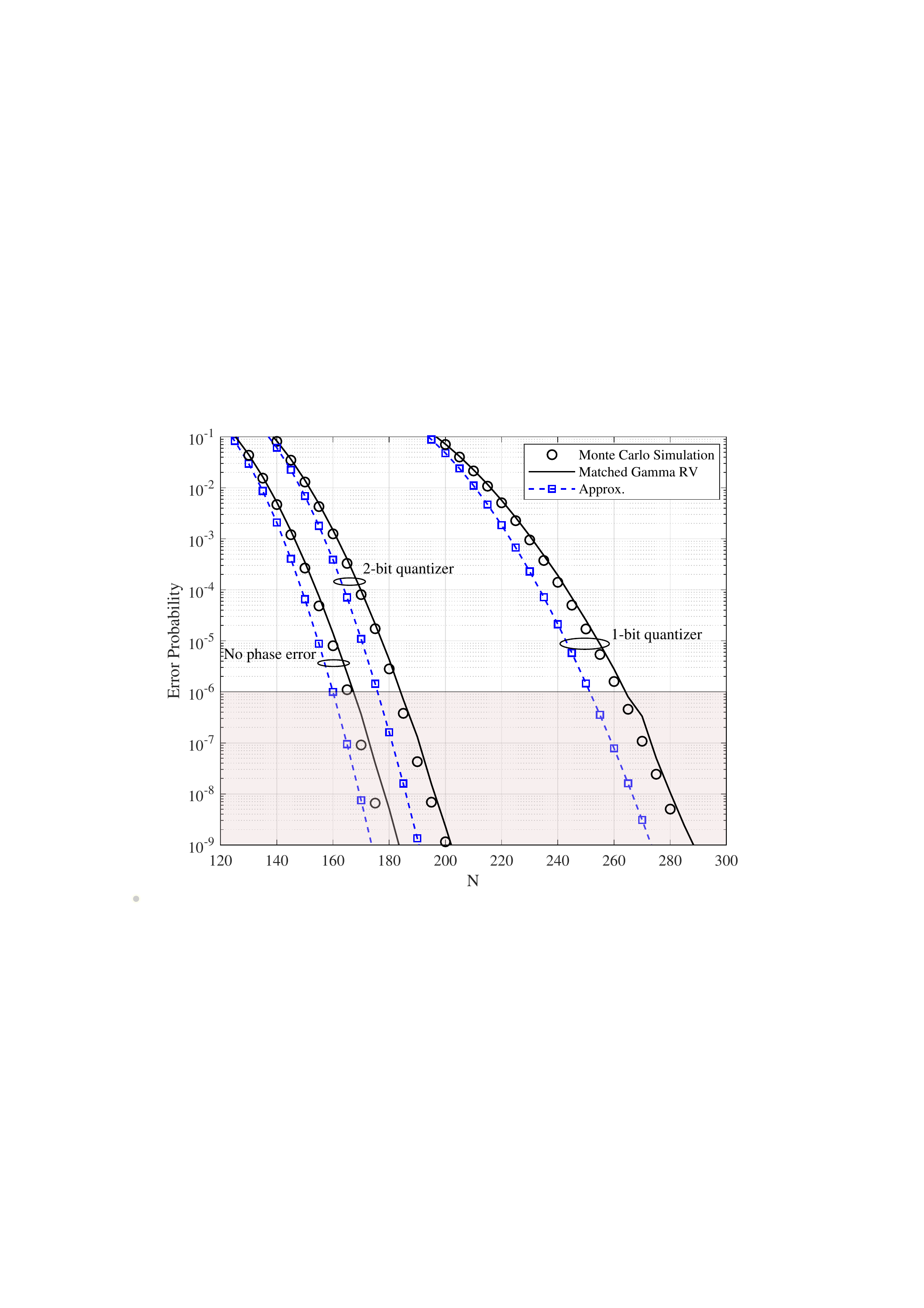}
		\caption{Average error probability versus RIS elements $N$ without direct channel.}
		\label{fig:ErrorProbability}
    \end{figure}
    
    \textcolor{black}{In Fig. \ref{fig:ErrorProbability_vs_distance} the average error probability results are illustrated for $d \in [5,95]$, where RIS is located at ($d$, 10) on the 2D plane (cf. Table \ref{table2}) (close to AP $\rightarrow$ close to AC). The number of RIS elements, $N = 512$. We observe that the error probability behavior is somewhat symmetric with respect to the distance from either the AP or the AC. The error probability degrades as the RIS moves further away from either the AP or the AC, with the worst performance observed for the case when it is equidistant from both. This is because the path loss, which is proportional to the product of the RIS distances from the AP and the AC, is the highest when the two distances are equal. }

   	\begin{figure}[t]
		\centering
		\includegraphics[trim = 9.5cm 9.5cm 9.5cm 9.5cm,scale=0.62]{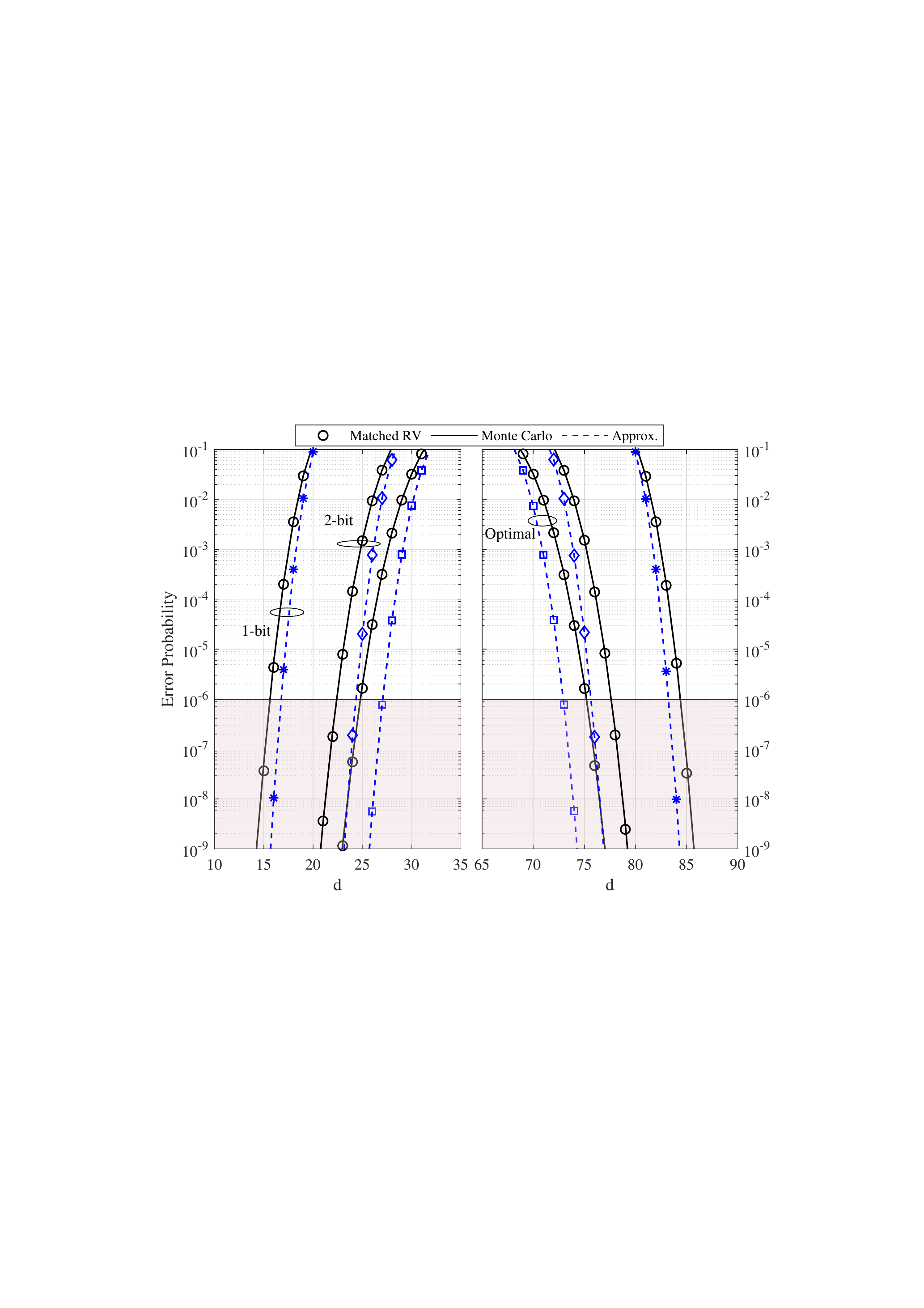}
		\caption{Average error probability performance in terms of changing the RIS location at ($d$, 10) for $N = 512$.}
		\label{fig:ErrorProbability_vs_distance}
    \end{figure}
    
    \textcolor{black}{In Fig. \ref{fig:Ergodic_rate_vs_distance_with_LoS_N4096} the average achievable rate is illustrated for $d \in [5,95]$, where RIS is located at ($d$, 10) and there is direct channel between AP and AC for $N=1024$. The curves are shown for different path loss exponents (PLE) in direct path to assess the impact of PLE on the accuracy of the proposed lowerbound for average FBL rate.} In contrast to the active relaying schemes where locating in the middle of transmitter and the receiver usually is an optimal choice to maximize the performance, we observe that the average achievable rate in the FBL regime will be maximized when the RIS is either close to transmitter or receiver. Additionally, there is a gap between the lower bound and exact value for all curves. Nevertheless, the gap does not change as the number of quantizer bits or the RIS location at ($d$, 10) change which confirms the suitability of the presented lower bound for comparison purposes and resource allocation algorithms.
   	\begin{figure*}[t]
   	\begin{subfigure}[b]{0.49\textwidth}  
		\centering
		\includegraphics[trim = 9.5cm 9.5cm 9.5cm 9.5cm,scale=0.6]{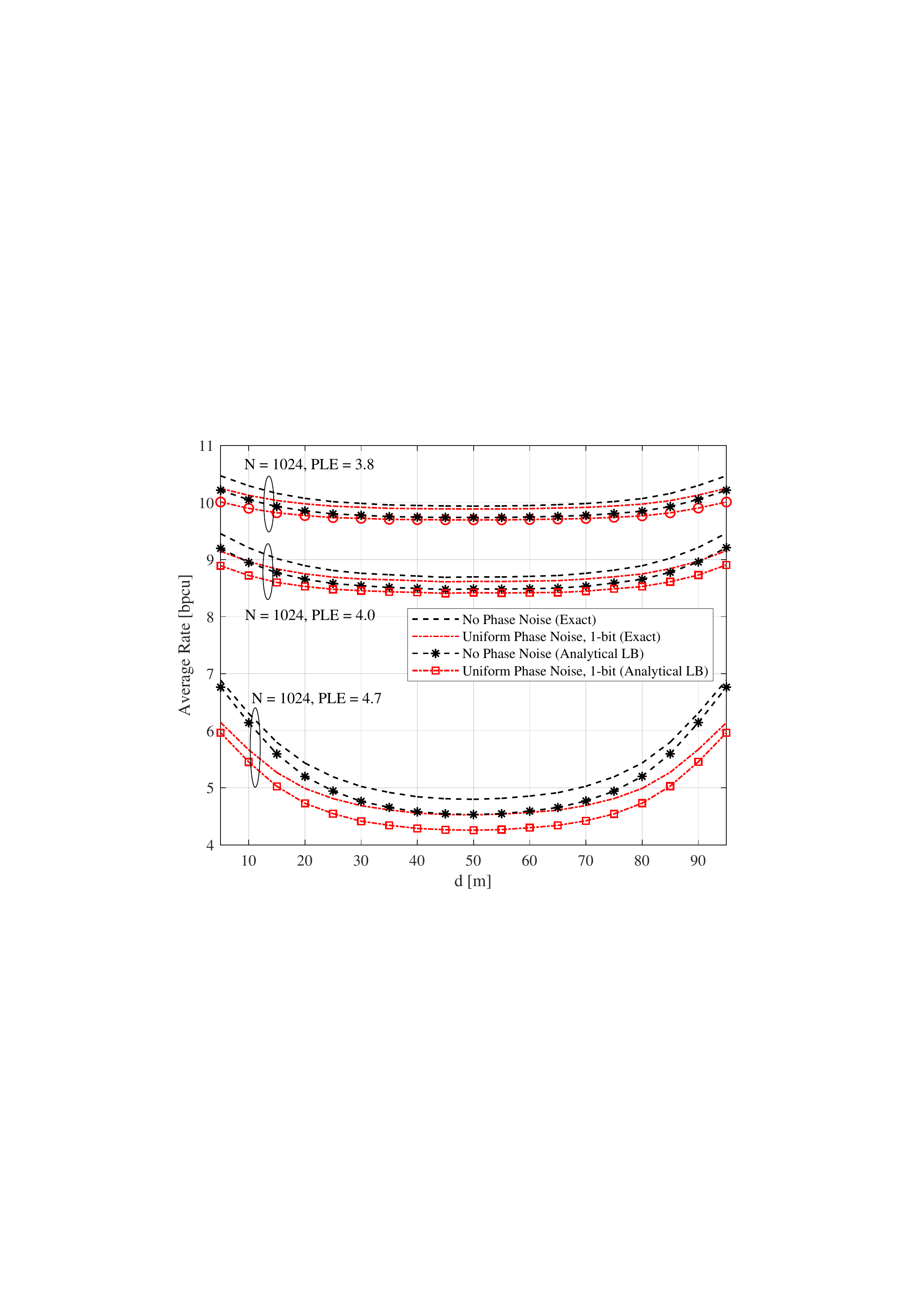}
		\caption{With direct channel.}
		\label{fig:Ergodic_rate_vs_distance_with_LoS_N4096}
    \end{subfigure}	
    \begin{subfigure}[b]{0.49\textwidth}  
		\centering
		\includegraphics[trim = 8.9cm 8.9cm 8.9cm 8.9cm,scale=0.6]{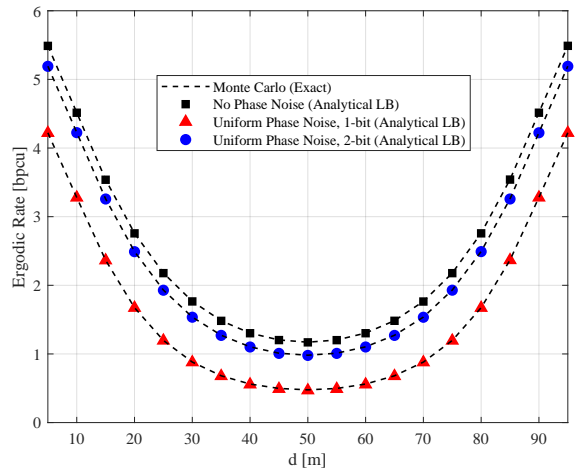}
		\caption{\textcolor{black}{Without direct channel.}}
		\label{fig:Ergodic_rate_vs_distance_without_LoS_N4096}
    \end{subfigure}
    \caption{The impact of changing the RIS location at ($d$, 10) on the average rate.}
    \end{figure*}
    \textcolor{black}{The average rate performance without direct channel is illustrated in Fig. \ref{fig:Ergodic_rate_vs_distance_without_LoS_N4096} where as shown in the curves there is a perfect match between the lower bound rate and the Monte Carlo simulations. Furthermore, the impact of phase error on the average rate improvement is the same.  To have a similar analysis of rate variation we see that when $d=50$ is changed to $d=95$ or $d=5$ the average rate is increased from $0.5$ ($1.0$) bpcu to $4.3$ ($5.3$) bpcu for 2-bit (1-bit) quantizer.} 
    
    In Fig. \ref{fig:AverageCH_use} the required number of channel uses as a function of RIS elements is illustrated in terms of quantizer  bits at the RIS \textcolor{black}{when target error probability is set to $10^{-9}$}. It is observed that, when the number of bits increases, the average channel blocklength will be reduced and asymptotically converges to the lower curve which is the case without noise at the RIS. There is a significant reduction when the number of quantization bits increases from one to two bits. This shows that in system level design considerations choosing a 2-bit quantizer will be beneficial and satisfactory than the complicated higher bit quantizers. Furthermore, given the channel blocklength $r=TW$ when the number of RIS elements increases the required number of channel blocklength to transmit the symbols will be reduced. \textcolor{black}{This means that as $N$ increases, the transmission duration $T$ is reduced given a fixed bandwidth $W$. Thus, RIS can be leveraged to achieve low latency transmissions for URLLC applications, demonstrating the applicability of RIS in FBL regime communications.}
    
  	\begin{figure}[t]
		\centering
		\includegraphics[trim = 8.5cm 8.5cm 8.5cm 8.5cm,scale=0.6]{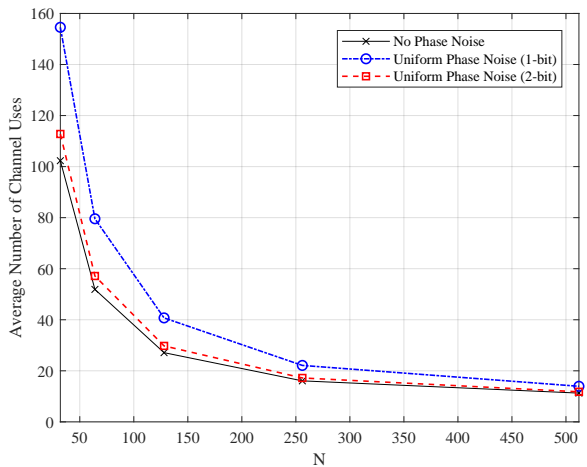}
		\caption{The average number of channel uses versus total number of RIS elements $N$.}
		\label{fig:AverageCH_use}
    \end{figure}
    
    Finally, the asymptotic behavior of the average square root of the channel dispersion, as well as the binomial approximation accuracy, is investigated in Fig. \ref{fig:ChannelDispersion}. As it is shown, the channel dispersion and its binomial approximation are well-matched when no phase error exists. The situation is different for the case when phase error exists and in a lower number of RIS elements, the accuracy is lower. Nevertheless, it can also be inferred that the channel dispersion asymptotically approaches to its upperbound when we have a sufficient number of RIS elements ($N$ approximately greater than 150) in both cases. This is because the received SNR increases so that the square root of channel dispersion will approach $\log_2(e)$. 
   	\begin{figure}[t]
		\centering
		\includegraphics[trim = 8.5cm 8.5cm 8.5cm 8.5cm,scale=0.6]{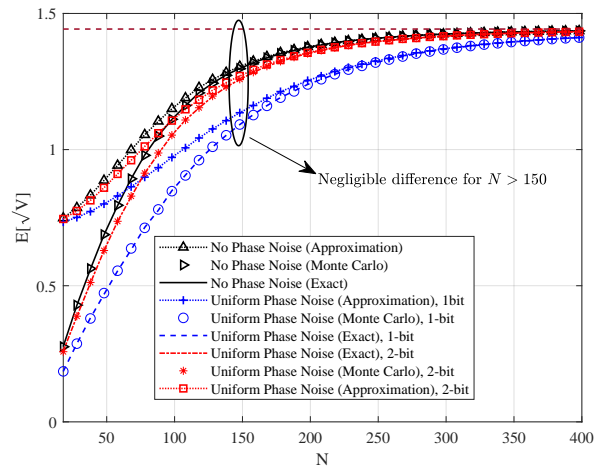}
		\caption{Average square root of channel dispersion ($\mathbb{E}[\sqrt{V(\gamma)}]$) and its asymptotic behaviour for large number of $N$.}
		\label{fig:ChannelDispersion}
    \end{figure}

    \section{Conclusion}
    \textcolor{black}{This paper analyzes the applicability of RIS for ensuring URLLC with FBL transmissions in a factory automation scenario.} We have presented the analytical derivation of the average achievable rate and we have analyzed the average error probability based on matching the received SNR to a Gamma RV. First, the analytical derivations of matching the SNR to a Gamma RV is presented whose parameters depend on the statistical mean and variance of the instantaneous SNR. Then, the average achievable rate is investigated \textcolor{black}{for the FBL regime} based on computing the related expectation with respect to proposed SNR distribution. The same analysis is performed to evaluate the average block error probability and channel blocklength. Next, the impact of phase error \textcolor{black}{resulting from} either quantization noise or hardware impairments is investigated in modeled SNR distribution parameters. The numerical results have shown that the RIS can be effectively employed in factory automation environments to \textcolor{black}{ensure high reliability and} reduce the error probability as a measure of reliability as well as the transmission latency as required by many URLLC applications. Furthermore, RIS is also found to significantly improve the average achievable rate. 
    
    As a future work, the analysis results in this paper can be leveraged for resource allocation problems in RIS-assisted URLLC networks to ensure systems reliability and maximizing average rate relying only on statistical measures of the channel. \textcolor{black}{On the other hand, as the size of the RIS grows, the identification of several energy/power consumption/dissipation sources, e.g., the utilized power at digital signal processing capabilities and the surface configuration power will be of paramount importance which  are interesting future research topics.} \textcolor{black}{Furthermore, the delay violation probability, distribution of the delay in URLLC systems and specifically the distribution of the channel uses as a function of the number of RIS elements are interesting future research directions where  the RIS technology plays a critical role  toward realizing URLLC systems' KPIs which highlights its applicability in factory automation environments.}
    \appendices
    
    \section{Computing \texorpdfstring{$\mathbb{E}[X]$}{} and \texorpdfstring{$\mathbb{E}[X^2]$}{} when \texorpdfstring{$\phi_n\neq0$}{} for \texorpdfstring{$\forall n \in \mathcal{N}$}{}}  
    
    \label{appendixA}
    Let us rewrite the random variable $X$ as 
    \begin{flalign}
        X = & \left|\left|h^{\text{AP}}_{\text{AC}}\right| + \sum_{n=1}^{N} \left|[\textbf{h}_{\text{RIS}}^{\text{AP}}]_n\right| \left| [\textbf{h}_{\text{AC}}^{\text{RIS}}]_n\right|e^{\mathrm{j}\phi_n}\right|^2, 
    \end{flalign}
    where $[.]_n$ denotes the n$^\text{th}$ element of a vector and $\phi_n =  \angle{h^{\text{AP}}_{\text{AC}}}- \angle [\textbf{h}_{\text{AC}}^{\text{RIS}}]_n + \angle [\textbf{h}_{\text{RIS}}^{\text{AP}}]_n + \theta_n$. In order to obtain the parameters of matched Gamma distribution, we should determine the shape and rate parameters. To do so, we first calculate the expected mean value of RV $X$ noting that for a RV $h \sim \text{Rayleigh}(\sigma)$ it holds  $\mathbb{E}[h] = \sqrt{\frac{\pi}{2}}\sigma$ which is given by
    \begin{flalign}
        \mathbb{E}&[X] = \eta^{\text{AP}\rightarrow \text{AC}} + \sum_{n=1}^{N} \eta^{\text{AP}\rightarrow\text{RIS}}_n \eta^{\text{RIS}\rightarrow\text{AC}}_n  \nonumber \\ & + \frac{\pi^2}{16}\sum_{n=1}^{N} \sum_{\substack{m=1\\ m \neq n}}^{N} \eta^{\text{RIS} \rightarrow\text{AP}}_n \eta^{\text{AC}\rightarrow\text{RIS}}_m \cos(\phi_n-\phi_m) + \nonumber \\ & \frac{\pi}{4}\sqrt{\pi\eta^{\text{AP}\rightarrow \text{AC}}} \sum_{n=1}^{N} \sqrt{\eta^{\text{RIS} \rightarrow\text{AP}}_n \eta^{\text{AC}\rightarrow\text{RIS}}_n} \cos(\phi_n),
        \label{expected_value_2}
    \end{flalign}
    where we assumed the channel responses are mutually independent. By neglecting the impact of RIS surface dimensions on large scale fading we will have
    \begin{flalign}
        \mathbb{E}&[X] = \eta^{\text{AP}\rightarrow \text{AC}} + N \eta^{\text{AP}\rightarrow\text{RIS}} \eta^{\text{RIS}\rightarrow\text{AC}} \nonumber \\ & + \frac{\pi^2}{16}\eta^{\text{RIS} \rightarrow\text{AP}} \eta^{\text{AC}\rightarrow\text{RIS}} \sum_{n=1}^{N} \sum_{\substack{m=1\\ m \neq n}}^{N}  \cos(\phi_n-\phi_m)  \nonumber \\ & +\frac{\pi}{4}\sqrt{\pi\eta^{\text{AP}\rightarrow \text{AC}}\eta^{\text{RIS} \rightarrow\text{AP}} \eta^{\text{AC}\rightarrow\text{RIS}}} \sum_{n=1}^{N} \cos(\phi_n),
        \label{expected_value_3}
    \end{flalign}
    where $\eta^{\text{AP}\rightarrow \text{AC}} = \eta^{\text{AP}\rightarrow \text{AC}}_n$, $\eta^{\text{AP}\rightarrow\text{RIS}}=\eta^{\text{AP}\rightarrow\text{RIS}}_n$ and $\eta^{\text{RIS}\rightarrow\text{AC}}=\eta^{\text{RIS}\rightarrow\text{AC}}_n$  $\forall n \in \mathcal{N}$.

    In what follows we continue by computing the expected value of $X^2$ which is defined as
    \begin{flalign}
        \mathbb{E}[X^2] = \mathbb{E}\left[\left||h^{\text{AP}}_{\text{AC}}| + \sum_{n=1}^{N} \big|[\textbf{h}_{\text{RIS}}^{\text{AP}}]_n\big| \big| [\textbf{h}_{\text{AC}}^{\text{RIS}}]_n\big|e^{\mathrm{j}\phi_n}\right|^4\right],
    \end{flalign}
    for notation simplicity we define 
    \begin{subequations}
        \begin{flalign}
        c_0  &\coloneqq |h^{\text{AP}}_{\text{AC}}|, \\
        a_n &\coloneqq \big|[\textbf{h}_{\text{RIS}}^{\text{AP}}]_n\big|, \\
        b_n &\coloneqq \big| [\textbf{h}_{\text{AC}}^{\text{RIS}}]_n\big|.
        \end{flalign}
    \end{subequations}
    The binomial expansion of the expression inside expectation yields
    \begin{flalign}
        \mathbb{E}&\Big[\Big|c_0 + \sum_{n=1}^{N} a_n b_n e^{\mathrm{j}\phi_n}\Big|^4\Big]
        = \mathbb{E}\Bigg[c_0^4 + 2c_0^2\sum_{n=1}^{N}a_n^2b_n^2(1+2\cos^2(\phi_n))  \nonumber \\ & + 2 c_0^2\sum_{n=1}^{N}\sum_{\substack{m=1\\ m\neq n}}^{N}a_nb_na_mb_m(\cos(\phi_n-\phi_m)+2\cos(\phi_n)\cos(\phi_m)) \nonumber \\ 
        & + 4c_0^3\sum_{n=1}^{N}a_nb_n\cos(\phi_n) + \sum_{n=1}^{N}\sum_{\substack{m=1\\ m\neq n}}^{N}a_n^2b_n^2a_m^2b_m^2 + \sum_{n=1}^{N}a_n^4b_n^4  \nonumber \\ & + 2 \sum_{n=1}^{N}a_n^2b_n^2\sum_{n=1}^{N}\sum_{\substack{m=1\\ m\neq n}}^{N}a_nb_na_mb_m\cos(\phi_n-\phi_m) \nonumber \\
        &  + 4c_0\sum_{n=1}^{N}\sum_{\substack{m=1\\ m\neq n}}^{N}a_n^2b_n^2a_mb_m\cos(\phi_m) + 4c_0\sum_{n=1}^{N}a_n^3b_n^3\cos(\phi_n)  \nonumber \\ & + \Big(\sum_{n=1}^{N}\sum_{\substack{m=1\\ m\neq n}}^{N}a_nb_na_mb_m\cos(\phi_n-\phi_m)\Big)^2 \nonumber \\
        & + 4 c_0 \sum_{n=1}^{N}a_nb_n\cos(\phi_n)\sum_{n=1}^{N}\sum_{\substack{m=1\\ m\neq n}}^{N}a_nb_na_mb_m\cos(\phi_n-\phi_m)\Bigg],
        \label{appndx_1}
    \end{flalign}
    since $\mathbb{E}[c_0] = \frac{\sqrt{\pi\eta^{\text{AP} \rightarrow \text{AC}}}}{2}$, $\mathbb{E}[a_n] = \frac{\sqrt{\pi\eta^{\text{AP} \rightarrow \text{RIS}}}}{2}$ and $\mathbb{E}[b_n] = \frac{\sqrt{\pi\eta^{\text{RIS} \rightarrow \text{AC}}}}{2}$ and also noting that for a RV $h\sim$Rayleigh$(\sigma)$ we have 
    \begin{subequations}
        \begin{flalign}
            \mathbb{E}[h] = & \sqrt{\frac{\pi}{2}}\sigma,  \\ 
            \mathbb{E}[h^2] = & 2\sigma^2,  \\ 
            \mathbb{E}[h^3] = & 3\sqrt{\frac{\pi}{2}}\sigma^3, \\ 
            \mathbb{E}[h^4] = & 8\sigma^4.
        \end{flalign}
    \end{subequations}
    Based on above, and after some mathematical manipulations the equation \eqref{appndx_1} will be written as given in Eq. \eqref{appndx_2} on top of the next page.
    \begin{figure*}
    \begin{flalign}
        &\mathbb{E}\Big[\Big|c_0 + \sum_{n=1}^{N} a_n b_n e^{\mathrm{j}\phi_n}\Big|^4\Big] = 2\varsigma^2 + \nonumber \\
        &  \varsigma\varrho\vartheta\bigg(2N+4\sum_{n=1}^{N}\cos^2(\phi_n)  + \frac{\pi^2}{8}\sum_{n=1}^{N}\sum_{\substack{m=1\\ m\neq n}}^{N}\left(\cos(\phi_n-\phi_m)+2\cos(\phi_n)\cos(\phi_m)\right) \bigg) +    \frac{3\pi}{4}\sqrt{\pi\varsigma^3\varrho\vartheta} \sum_{n=1}^{N}\cos(\phi_n)  \nonumber \\ 
        & 
        +  \varrho^2\vartheta^2 \bigg(N(N+3)+\frac{\pi^2(2N+5)}{16}\sum_{n=1}^{N}\sum_{\substack{m=1\\ m\neq n}}^{N}\cos(\phi_n-\phi_m) +2\sum_{n=1}^{N}\sum_{\substack{m=1\\ m\neq n}}^{N}\cos^2(\phi_n-\phi_m) \nonumber \\ 
        & \quad\quad  + \frac{\pi^2}{8} \sum_{n=1}^{N}\sum_{\substack{m=1\\ m\neq n}}^{N}\sum_{\substack{n'=1\\ n'\neq n,m}}^{N}\cos(\phi_n-\phi_m)\cos(\phi_{n'}-\phi_{n}) +  \frac{\pi^2}{8} \sum_{n=1}^{N}\sum_{\substack{m=1\\ m\neq n}}^{N}\sum_{\substack{n'=1\\ n'\neq n,m}}^{N}\cos(\phi_n-\phi_m)\cos(\phi_{n'}-\phi_{m}) \nonumber \\ 
        & \quad\quad  +\frac{\pi^4}{2^8} \sum_{n=1}^{N}\sum_{\substack{m=1\\ m\neq n}}^{N}\sum_{\substack{n'=1,\\n'\neq n,m}}^{N}\sum_{\substack{m'=1,\\m'\neq n, m,n'}}^{N}\cos(\phi_n-\phi_m)\cos(\phi_{n'}-\phi_{m'}) \bigg)
        \nonumber \\ 
        &
        + \sqrt{\varsigma\varrho^3\vartheta^3}\pi \sqrt{\pi}\bigg( \frac{4N+5}{8} \sum_{n=1}^{N}\cos(\phi_n)+ \frac{1}{2}\sum_{n=1}^{N}\sum_{\substack{m=1\\ m\neq n}}^{N}\cos(\phi_n-\phi_m)\cos(\phi_n) \nonumber \\ 
        &
        \quad\quad +\frac{1}{2}\sum_{n=1}^{N}\sum_{\substack{m=1\\ m\neq n}}^{N}\cos(\phi_n-\phi_m)\cos(\phi_m) +\frac{\pi^2}{32}\sum_{n=1}^{N}\sum_{\substack{m=1\\ m\neq n}}^{N}\sum_{\substack{n'=1\\ n'\neq n, m}}^{N}\cos(\phi_n-\phi_m)\cos(\phi_{n'}) \bigg),
        \label{appndx_2}
    \end{flalign}
    \hrule
    \end{figure*}
    where $\mathbb{E}[c_0^2]=\eta^{\text{AP} \rightarrow \text{AC}}=\varsigma $, $ \mathbb{E}[a_n^2]=\eta^{\text{AP} \rightarrow \text{RIS}}=\varrho$ and $ \mathbb{E}[b_n^2]= \eta^{\text{RIS} \rightarrow \text{AC}}=\vartheta$. 

    \section{Computing \texorpdfstring{$\mathbb{E}[X]$}{} and \texorpdfstring{$\mathbb{E}[X^2]$}{} when \texorpdfstring{$\phi_n=0$}{} for \texorpdfstring{$\forall n \in \mathcal{N}$}{}}
    \label{appendixB}
    In a special case where the phase adjustment is perfectly done at the RIS we have $\phi_n = 0$, $\forall n \in \mathcal{N}$ which yields
    \begin{flalign}
        &\mathbb{E}[X] = \eta^{\text{AP}\rightarrow \text{AC}} + N \eta^{\text{AP}\rightarrow\text{RIS}} \eta^{\text{RIS}\rightarrow\text{AC}} + \nonumber \\ 
        & \frac{\pi^2N(N-1)}{16}\eta^{\text{AP} \rightarrow\text{RIS}} \eta^{\text{RIS}\rightarrow\text{AC}} + \frac{\pi N}{4}\sqrt{\pi\eta^{\text{AP}\rightarrow \text{AC}}\eta^{\text{AP} \rightarrow\text{RIS}} \eta^{\text{RIS}\rightarrow\text{AC}}},
        \label{appndx_b_eq1}
    \end{flalign}
    and
    \begin{flalign}
        \mathbb{E}[X^2]= & \mathbb{E}\left[\left|c_0 + \sum_{n=1}^{N} a_n b_n\right|^4\right] = 2\varsigma^2 + \varsigma\varrho\vartheta N\big(6  + \frac{3(N-1)\pi^2}{8} \big) \nonumber \\ 
        &  +  \frac{3N\pi^{1.5}}{4}\sqrt{\varsigma^3\varrho\vartheta} +
        \frac{\varrho^2\vartheta^2N}{256} \Big(\pi ^4 (N-3) (N-2) (N-1) \nonumber \\ 
        & + 48 \pi ^2 (2 N-1) (N-1)+768 N + 256 \Big) 
        \nonumber \\ 
        &
        + \sqrt{\varsigma\varrho^3\vartheta^3}\frac{N \pi^{1.5} }{32} \left( \pi ^2 (N-2) (N-1)+48 N-12 \right).
        \label{appndx_3}
    \end{flalign}
    where $\varsigma = \eta^{\text{AP} \rightarrow \text{AC}}$, $\varrho = \eta^{\text{AP} \rightarrow \text{RIS}}$ and $\vartheta = \eta^{\text{RIS} \rightarrow \text{AC}}$.

    \textcolor{black}{\section{Proof of Remark \ref{remark_channel_use}}
    \label{appendixC}
    If we define $A\coloneqq\frac{\mathcal{C}_1}{\mathcal{C}_2Q^{-1}(\varepsilon)}=\frac{\mathbb{E}[\log_2(1+\gamma)]}{Q^{-1}(\varepsilon)\mathbb{E}[\sqrt{\text{V}(\gamma)}]}$ the average channel use is reformulated as follows
    \begin{flalign}
        \bar{r} = \frac{L}{\mathcal{C}_1} + \frac{1}{4A^2}+\frac{1}{2A}\sqrt{\frac{1}{A^2}+\frac{4L}{\mathcal{C}_1}}, \label{r_bar_appx}
    \end{flalign}
    then, noting that $A$ is an increasing function in terms of number of RIS elements (this can be readily proved by firstly noting that the SNR $\gamma$ increases with respect to RIS elements $N$ secondly by computing the first derivative of $f(\gamma)=\frac{\log_2(1+\gamma)}{\sqrt{\text{V}(\gamma)}}$ with respect to $\gamma$ which is $f'(\gamma)\geq0$), all the terms in the expression given in \eqref{r_bar_appx} will correspond to  be decreasing. This completes the proof.}
    
	\bibliographystyle{IEEEtran}
    \bibliography{IEEEabrv,refs}

	
\end{document}